\def\UseBibLatex{1}%
\providecommand{\SoCGVer}[1]{}%
\providecommand{\NotSoCGVer}[1]{#1}%
\def\input@path{{lipics/}{../lipics/}}
\providecommand{\SoCGVer}[1]{#1}%
\providecommand{\NotSoCGVer}[1]{}%
\newcommand{\SarielComp}[1]{}
\newcommand{\NotSarielComp}[1]{#1}%
\newcommand{\SarielComp}[1]{#1}%
\newcommand{\NotSarielComp}[1]{}%
\newcommand{\IfPrinterVer}[2]{#2}%
\providecommand{\BibLatexMode}[1]{}
\providecommand{\BibTexMode}[1]{#1}
  \renewcommand{\BibLatexMode}[1]{}
  \renewcommand{\BibTexMode}[1]{#1}
  \renewcommand{\BibLatexMode}[1]{#1}
  \renewcommand{\BibTexMode}[1]{}
   \theoremstyle{plain}%
   \newtheorem{conjecture}[theorem]{Conjecture}
   \newtheorem{fact}[theorem]{Fact}
   \newtheorem{observation}[theorem]{Observation}
   \newtheorem{invariant}[theorem]{Invariant}
   \newtheorem{question}[theorem]{Question}
   \newtheorem{prop}[theorem]{Proposition}
   \newtheorem{openproblem}[theorem]{Open Problem}
   \theoremstyle{plain}%
   \newtheorem{defn}[theorem]{Definition}
   \newtheorem{problem}[theorem]{Problem}
   \newtheorem{xca}[theorem]{Exercise}
   \newtheorem{exercise_h}[theorem]{Exercise}
   \newtheorem{assumption}[theorem]{Assumption}%
   \newtheorem{proofof}{Proof of\!}%
\theoremstyle{plain}%
\newtheorem{theorem}{Theorem}[section]
\newtheorem{lemma}[theorem]{Lemma}
\theoremstyle{plain}%
\newtheorem*{remark:unnumbered}[theorem]{Remark}%
\newtheorem{remark}[theorem]{Remark}%
\newtheorem{definition}[theorem]{Definition}
\newtheorem{defn}[theorem]{Definition}
\newcommand{\myqedsymbol}{\rule{2mm}{2mm}}
\theoremstyle{nonumberplain}%
\newtheorem{proof}{Proof:}%
\newcommand{\atgen}{\symbol{'100}}
\newcommand{\SarielThanks}[1]{\thanks{Department of Computer Science;
      University of Illinois; 201 N. Goodwin Avenue; Urbana, IL,
      61801, USA; {\tt sariel\atgen{}illinois.edu}; {\tt
         \url{http://sarielhp.org/}.} #1}}
\newcommand{\TimThanks}[1]{\thanks{Department of Computer Science;
      University of Illinois; 201 N. Goodwin Avenue; Urbana, IL,
      61801, USA; {\tt tzhou28\atgen{}illinois.edu}.
      #1}}
\newcommand{\HLink}[2]{\hyperref[#2]{#1~\ref*{#2}}}
\newcommand{\HLinkSuffix}[3]{\hyperref[#2]{#1\ref*{#2}{#3}}}
\newcommand{\figlab}[1]{\label{fig:#1}}
\newcommand{\figref}[1]{\HLink{Figure}{fig:#1}}
\newcommand{\thmlab}[1]{{\label{theo:#1}}}
\newcommand{\thmref}[1]{\HLink{Theorem}{theo:#1}}
\providecommand{\deflab}[1]{\label{def:#1}}
\newcommand{\remlab}[1]{\label{rem:#1}}
\newcommand{\remref}[1]{\HLink{Remark}{rem:#1}}%
\newcommand{\seclab}[1]{\label{sec:#1}}
\newcommand{\secref}[1]{\HLink{Section}{sec:#1}}
\newcommand{\lemlab}[1]{\label{lemma:#1}}
\newcommand{\lemref}[1]{\HLink{Lemma}{lemma:#1}}%
\providecommand{\eqlab}[1]{}%
\renewcommand{\eqlab}[1]{\label{equation:#1}}
\definecolor{blue25emph}{rgb}{0, 0, 11}
\providecommand{\emphic}[2]{%
   \textcolor{blue25emph}{%
      \textbf{\emph{#1}}}%
   \index{#2}}
\providecommand{\emphi}[1]{\emphic{#1}{#1}}
\numberwithin{figure}{section}%
\numberwithin{table}{section}%
\numberwithin{equation}{section}%
\providecommand{\remove}[1]{}%
\newcommand{\pth}[2][\!]{\mleft({#2}\mright)}%
\newcommand{\ceil}[1]{\left\lceil {#1} \right\rceil}
\newcommand{\floor}[1]{\left\lfloor {#1} \right\rfloor}
\newcommand{\cardin}[1]{\left| {#1} \right|}%
\renewcommand{\th}{th\xspace}
\newcommand{\ds}{\displaystyle}%
\renewcommand{\Re}{\mathbb{R}}%
\newlist{compactenumA}{enumerate}{5}%
\setlist[compactenumA]{topsep=0pt,itemsep=-1ex,partopsep=1ex,parsep=1ex,%
   label=(\Alph*)}%
\newlist{compactenuma}{enumerate}{5}%
\setlist[compactenuma]{topsep=0pt,itemsep=-1ex,partopsep=1ex,parsep=1ex,%
   label=(\alph*)}%
\newlist{compactenumI}{enumerate}{5}%
\setlist[compactenumI]{topsep=0pt,itemsep=-1ex,partopsep=1ex,parsep=1ex,%
   label=(\Roman*)}%
\newlist{compactenumi}{enumerate}{5}%
\setlist[compactenumi]{topsep=0pt,itemsep=-1ex,partopsep=1ex,parsep=1ex,%
   label=(\roman*)}%
\newlist{compactitem}{itemize}{5}%
\setlist[compactitem]{topsep=0pt,itemsep=-1ex,partopsep=1ex,parsep=1ex,%
   label=\ensuremath{\bullet}}%
\providecommand{\Mh}[1]{#1}%
\DeclareFontFamily{U}{BOONDOX-calo}{\skewchar\font=45 }
\DeclareFontShape{U}{BOONDOX-calo}{m}{n}{
  <-> s*[1.05] BOONDOX-r-calo}{}
\DeclareFontShape{U}{BOONDOX-calo}{b}{n}{
  <-> s*[1.05] BOONDOX-b-calo}{}
\DeclareMathAlphabet{\mathcalb}{U}{BOONDOX-calo}{m}{n}
\SetMathAlphabet{\mathcalb}{bold}{U}{BOONDOX-calo}{b}{n}
\DeclareMathAlphabet{\mathbcalb}{U}{BOONDOX-calo}{b}{n}
\newcommand{\SaveContent}[2]{%
   \expandafter\newcommand{#1}{#2}%
}
\newcommand{\IncludeGraphics}[2][]{%
   \typeout{}%
   \typeout{Graphics: #2}%
   \typeout{\ includegraphics[#1]{#2}}%
   \includegraphics[#1]{#2}
   \typeout{}%
}
\newcommand{\QS}{\Mh{Q}}%
\newcommand{\PS}{\Mh{P}}%
\newcommand{\SSet}{\Mh{\Xi}}%
\newcommand{\RS}{\Mh{R}}%
\definecolor{almostblack}{rgb}{0, 0, 0.3}
\newcommand{\emphw}[1]{{\textcolor{almostblack}{\emph{#1}}}}%
\newcommand{\pc}{\Mh{c}}%
\newcommand{\pb}{\Mh{b}}%
\newcommand{\pp}{\Mh{p}}%
\newcommand{\pq}{\Mh{q}}%
\newcommand{\pf}{\Mh{f}}%
\newcommand{\CH}{\Mh{\mathcal{CH}}}
\newcommand{\CHX}[1]{\CH\pth{#1}}
\newcommand{\Line}{\Mh{\ell}}%
\newcommand{\Term}[1]{\textsf{#1}}
\newcommand{\LP}{\Term{LP}\xspace}%
\newcommand{\VC}{\Term{VC}\xspace}%
\newcommand{\depthTKY}[2]{\Mh{\mathsf{d_{TK}}}\pth{#1}}%
\newcommand{\LS}{\Mh{L}}%
\newcommand{\LSr}{\Mh{L_{\Mapsfrom}}}%
\newcommand{\RSr}{\Mh{R_{\Mapsto}}}%
\newcommand{\rs}{\Mh{\mathcalb{r}}}%
\newcommand{\ls}{\Mh{\mathcalb{l}}}%
\newcommand{\dualX}[1]{#1^{\Mh{\star}}}%
\newcommand{\hp}{\Mh{\mathcalb{h}^{\!\!+}}}%
\newcommand{\hL}{\Mh{\mathcalb{h}}}%
\newcommand{\TS}{\Mh{T}}%
\newcommand{\pt}{\Mh{t}}%
\newcommand{\etal}{\textit{et~al.}\xspace}%
\newcommand{\eps}{{\varepsilon}}%
\newcommand{\Log}{\Mh{\mathcalb{l}}}%
\newcommand{\rank}{\Mh{r}}%
\newcommand{\TLPY}[2]{T_{\LP}\pth{#1, #2}}%
\newcommand{\Ow}{\mathcal{O}_w}
\newcommand{\Soberon}{Sober{\'{o}}n\xspace}%
\newcommand{\Caratheodory}{Carath\'eodory\xspace}
\newcommand{\Tdepth}{\texttt{T}-depth\xspace}%
\newcommand{\Ball}{\Mh{B}}%
\newcommand{\Vercica}{Vre{\'c}ica\xspace}
\newcommand{\seg}{\Mh{s}}%
\newcommand{\hplane}{\Mh{h}}%
\newcommand{\bd}{\partial}%
\title{Improved Approximation Algorithms for Tverberg Partitions}
   \author{Sariel Har-Peled}%
   {Department of Computer Science, University of Illinois, 201
      N. Goodwin Avenue, Urbana, IL 61801, USA}%
   {sariel@illinois.edu}%
   {https://orcid.org/0000-0003-2638-9635}%
   {Work on this paper was partially supported by a NSF AF award
      CCF-1907400.}%
   \author{Timothy Zhou}%
   {Department of Computer Science, University of Illinois, 201
      N. Goodwin Avenue, Urbana, IL 61801, USA}%
   {}%
   {}%
   {Work on this paper was partially supported by a NSF AF award
      CCF-1907400.}%
   \author{%
      Sariel Har-Peled%
      \SarielThanks{Work on this paper was partially supported by a
         NSF AF award CCF-1907400.  %
      }%
      \and %
      Timothy Zhou%
      \TimThanks{}%
   }%
   \authorrunning{S. Har-Peled and T, Zhou} %
\keywords{Geometric
      spanners, vertex failures, robustness}%
   \date{\today}%
\begin{document}

\maketitle

\begin{abstract}
    Tverberg's theorem states that a set of $n$ points in $\Re^d$ can
    be partitioned into $\ceil{n/(d+1)}$ sets whose convex hulls all
    intersect. A point in the intersection (aka Tverberg point) is a
    centerpoint, or high-dimensional median, of the input point set.
    While randomized algorithms exist to find centerpoints with some
    failure probability, a partition for a Tverberg point provides a
    certificate of its correctness.

    Unfortunately, known algorithms for computing exact Tverberg
    points take $n^{O(d^2)}$ time. We provide several new
    approximation algorithms for this problem, which improve running
    time or approximation quality over previous work. In particular,
    we provide the first strongly polynomial (in both $n$ and $d$)
    approximation algorithm for finding a Tverberg point.
\end{abstract}

\section{Introduction}

Given a set $\PS$ of $n$ points in the plane and a query point $\pq$,
classification problems ask whether $\pq$ belongs to the same class as
$\PS$.  Some algorithms use the convex hull $\CHX{\PS}$ as a decision
boundary for classifying $\pq$. However, in realistic datasets, $\PS$
may be noisy and contain outliers, and even one faraway point can
dramatically enlarge the hull of $\PS$.  Thus, we would like to
measure how deeply $\pq$ lies within $\PS$ in way that is more robust
against noise.

In this paper, we investigate the notion of Tverberg depth. However,
there are many related measures of depth in the literature, including:
\begin{compactenumA}
    \smallskip%
    \item \textbf{Tukey depth}. The Tukey depth of $\pq$ is the
    minimum number of points that must be removed before $\pq$ becomes
    a vertex of the convex hull. Computing the depth is equivalent to
    computing the closed halfspace that contains $\pq$ and the
    smallest number of points of $\PS$, and this takes $O(n \log n)$
    time in the plane \cite{c-oramt-04}.

    \smallskip%
    \item \textbf{Centerpoint}. In $\Re^d$, a point with Tukey depth
    $n\alpha$ is an $\alpha$-\emphw{centerpoint}. There is always a
    $1/(d+1)$-centerpoint, known simply as the centerpoint, which can
    be computed exactly in $O(n^{d-1})$ time
    \cite{jm-ccfps-94,c-oramt-04}. It can be approximated using the
    centerpoint of a sample \cite{cemst-acpir-96}, but getting a
    polynomial-time (in both $n$ and $d$) approximation algorithm
    proved challenging.  Clarkson \etal \cite{cemst-acpir-96} provided
    an algorithm that computes a $1/4d^2$-centerpoint in roughly
    $O(d^9)$ time.  Miller and Sheehy \cite{ms-acp-10} derandomized it
    to find a (roughly) $1/2d^2$-centerpoint in $n^{O(\log d)}$
    time. More recently, Har-Peled and Mitchell \cite{hj-jcps-19}
    improved the running time to compute a (roughly)
    $1/d^2$-centerpoint in (roughly) $O(d^7)$ time.

    \smallskip%
    \item \textbf{Onion depth}. Imagine peeling away the vertices of
    the current convex hull and removing them from $\PS$. The onion
    depth is the number of layers which must be removed before the
    point $\pq$ is exposed. The convex layers of points in the plane
    can be computed in $O(n \log n)$ time by an algorithm of Chazelle
    \cite{c-clps-85}. The structure of convex layers is
    well-understood for random points \cite{d-co-04} and grid points
    \cite{hl-pg-13}.

    \smallskip%
    \item \textbf{Uncertainty.} Another model considers uncertainty
    about the locations of the points. Suppose that each point of
    $\PS$ has a certain probability of existing, or alternatively, its
    location is given via a distribution. The depth of query point
    $\pq$ is the probability that $\pq$ is in the convex hull once
    $\PS$ has been sampled.  Under certain assumptions, this
    probability can be computed exactly in $O(n \log n)$ time
    \cite{ahsyz-chuu-17}. Unfortunately, the computed value might be
    very close to zero or one, and therefore tricky to interpret.

    \smallskip%
    \item \textbf{Simplicial depth.}  The simplicial depth of $\pq$ is
    the number of simplices induced by $\PS$ containing it.  This
    number can be approximated quickly after some preprocessing
    \cite{ass-asd-15}. However, it can be quite large for a point
    which is intuitively shallow.
\end{compactenumA}

\paragraph*{Tverberg depth.}
Given a set $\PS$ of $n$ points in $\Re^d$, a \emphi{Tverberg
   partition} is a partition of $\PS$ into $k$ disjoint sets
$\PS_1, \ldots, \PS_k$ such that $\bigcap_i \CHX{\PS_i}$ is not
empty. A point in this intersection is a \emphi{Tverberg
   point}. Tverberg's theorem states that $\PS$ has a Tverberg
partition into $\ceil{n/(d+1)}$ sets.  In particular, the
\emph{Tverberg depth} (\emphw{\Tdepth}) of a point $\pq$ is the
maximum size $k$ of a Tverberg partition such that
$\pq \in \bigcap_{i=1}^k \CHX{\PS_i}$.

By definition, points of \Tdepth $n/(d+1)$ are centerpoints for $\PS$.
In the plane, Reay \cite{r-sgtt-79} showed that if a point has Tukey
depth $k \leq \cardin{\PS}/3$, then the \Tdepth of $\pq$ is $k$. This
property is already false in three dimensions \cite{a-mcpcm-93}.  The
two-dimensional case was handled by Birch \cite{b-o3pp-59}, who proved
that any set of $n$ points in the plane can be partitioned into $n/3$
triples whose induced triangles have a common intersection point.

\paragraph*{Computing a Tverberg point.}
For work on computing approximate Tverberg points, see
\cite{ms-acp-10,mw-atplt-13,rs-aatt-16,cm-ndtta-20} and the references
therein.  Currently, no polynomial-time (in both $n$ and $d$)
approximation algorithm is known for computing Tverberg points.  This
search problem is believed to be quite hard, see \cite{mmss-relpf-17}.

Algorithms for computing an \emph{exact} Tverberg point of \Tdepth
$n/(d+1)$ implement the construction implied by the original
proof. The runtime of such an algorithm is
\begin{math}
    d^{O(d^2)}n^{d(d+1) + 1},
\end{math}
see \secref{tverberg:exact}.  As previously mentioned, the exception
is in two dimensions, where the algorithm of Birch \cite{b-o3pp-59}
runs in $O(n\log n)$ time. But even in three dimensions, we are
unaware of an algorithm faster than $O(n^{13})$.

\paragraph*{Convex combinations and \Caratheodory's theorem.}

The challenge in finding a Tverberg point is that we have few
subroutines at our disposal with runtimes polynomial in $d$. Consider
the most basic task -- given a set $\PS$ of $n$ points and a query
point $\pq$, decide if $\pq$ lies inside $\CHX{\PS}$, and if so,
compute the convex combination of $\pq$ in term of the points of
$\PS$. This problem can be reduced to linear programming. Currently,
the fastest strongly polynomial \LP algorithms run in super-polynomial
time $2^{O(\sqrt{d \log d})} + O(d^2 n)$
\cite{c-lvali-95,msw-sblp-96}, where $d$ is the number of variables
and $n$ is the number of constraints.  However, any given convex
combination of $\PS$ representing $\pq$ can be sparsified in
polynomial time into a convex combination using only $d+1$ points of
$\PS$.  \lemref{c:lp} describes this algorithmic version of
\Caratheodory's theorem.

\paragraph*{Radon partitions in polynomial time.}
Finding points of \Tdepth $2$ is relatively easy.  Any set of $d+2$
points in $\Re^d$ can be partitioned into two disjoint sets whose
convex hulls intersect, and a point in the intersection is a
\emphw{Radon point}. Radon points can be computed in $O(d^3)$ time by
solving a linear system with $d+2$ variables.  Almost all the
algorithms for finding Tverberg points mentioned above amplify the
algorithm for finding Radon points.

\newcommand{\MPX}[1]{%
   \begin{minipage}{\SoCGVer{6cm}\NotSoCGVer{9cm}}
       #1
   \end{minipage}%
} \newcommand{\MPFX}[1]{%
   \begin{minipage}{2cm}
       \smallskip%
       #1 \smallskip%
   \end{minipage}%
}

\begin{figure}[p]
    \centering%
    \begin{tabular}{|c|c|l|}
      \hline
      Depth%
      &%
        Running time
      &%
        Ref / Comment
      \\
      \hline%
      $n/(d+1)$
      & %
        $d^{O(d^2)}n^{d(d+1) +1}\Bigr.$
      & %
        Tverberg theorem
      \\
      \hline%
      $d=2:\, n/3$
      & %
        $O(n \log n)$
      & %
        \MPX{
        \cite{b-o3pp-59}:
        \thmref{birch}%
        }\\
      \hline%
      \MPFX{%
      $\ds \frac{n}{2(d+1)^2}\Bigr.$%
      }
      & %
        $n^{O(\log d )}$
      & %
        \MPX{Miller and Sheehy \cite{ms-acp-10}}
      \\
      \hline%
      \MPFX{
      $\ds \frac{n}{4(d+1)^3}\Bigr.$}
      & %
        $d^{O(\log d )}n$
      & %
        \MPX{Mulzer and Werner
        \cite{mw-atplt-13}}
      \\
      \hline
      \MPFX{$\ds \frac{n}{2d(d+1)^2}\Bigr.$}%
      & %
        $\Ow(n^4)$
      & %
        \MPX{Rolnick and \Soberon \cite{rs-aatt-16}}%
      \\
      \hline
      \MPFX{$\ds \frac{(1-\delta)n}{d(d+1)}\Bigr.$}%
      & %
        $(d/\delta)^{O(d)} + \Ow(n^4)$
      & %
        \MPX{Rolnick and \Soberon \cite{rs-aatt-16}}%
      \\
      \hline%
      \multicolumn{3}{c}{$\Bigl.$New results}
      \\
      \hline%
      $\ds \frac{(1-\delta)n}{2(d+1)^2}\Bigr.$
      & %
        $d^{O(\log (d/\delta))}n$
      & %
        \thmref{m:w:better}%
      \\
      \hline%
      \begin{minipage}{2cm}
          \smallskip%
          $\ds \frac{n^{~}}{O(d^2 \log d)}\Bigr.$
      \end{minipage}
      &
        $O(dn)$
      & %
        \MPX{\lemref{tverberg:1}:
        Only partition}
      \\
      \hline%
      \begin{minipage}{2cm}
          \smallskip%
          $\ds \frac{n^{~}}{O(d^3 \log d)}\Bigr.$ \smallskip%
      \end{minipage}
      &
        $O(dn + d^{7} \log^{6} d)$
      & %
        \MPX{%
        \lemref{partition}:
        Partition + point, but
        no convex combination}
      \\
      \hline%
      \MPFX{$\ds \frac{n^{~}}{O(d^2 \log d)}\Bigr.$}
      & %
        $\Ow(n^{5/2} + nd^3)$
      & %
        \MPX{\lemref{tverberg:1:lp}:
        Weakly polynomial}
      \\
      \hline%
      \MPFX{
      $\ds\frac{(1-\delta)n}{2(d+1)^2}$%
      }
      & %
        $d^{O(\log \log (d/\delta))}\Ow( n^{5/2}) $
      & %
        \MPX{
        \thmref{m:w:better:lp}:
        Weakly quasi
        polynomial}
      \\
      \hline%
      \MPFX{%
      $\ds\frac{(1-\delta)n}{d(d+1)}$}
      & %
        \begin{minipage}{4cm}
            \smallskip%
            \begin{math}
                O(c + c' n + d^2 n \log^2 n )
            \end{math}\\
            $c= d^{O(d)} /\delta^{2(d-1)}$\\
            $c'= 2^{O(\sqrt{d\log d})}$ \smallskip%
        \end{minipage}
      & %
        \MPX{
        \lemref{t:low:dim}:
        Useful for low dimensions
        }
      \\
      \hline
    \end{tabular}
    \caption{The known and improved results for Tverberg partition.
       The notation $\Ow$ hides terms with polylogarithmic dependency
       on size of the numbers, see \remref{width}. The parameter
       $\delta$ can be freely chosen. }
    \figlab{results}
\end{figure}

\begin{figure}
    \centering%
    \begin{tabular}{|c|c||{c}||c|c|l|}
      \hline
      Dim
      & T. Depth
      & New depth
      & Known
      & Ref
      & Comment\\
      \hline%
      $3$
      & $n/4$
      & $n/6$
      & $n/8$
      &
      &
      \\
      $4$
      & $n/5$
      & $n/9$
      & $n/16$
      &
        \cite{mw-atplt-13}
      &
      \\
      $5$
      & $n/6$
      & $n/18$
      & $n/32$
      &
      &
      \\
      \hline
      \hline
      $6$
      & $n/7$
      & $n/27$
      & $(1-\delta)n/42$
      &
      & Original paper describes  a weakly
      \\
      $7$
      & $n/8$
      & $n/54$
      & $(1-\delta)n/56$
      &
        \cite{rs-aatt-16}
      & %
        polynomial algorithm. The improved
      \\
      $8$
      & $n/9$
      & $n/81$
      & $(1-\delta)n/72$
      &
      &%
        algorithm is described in \lemref{t:low:dim}.%
      \\
      \hline
    \end{tabular}%
    \caption{The best approximation ratios for Tverberg depth in low
       dimensions, with nearly linear-time algorithms, as implied by
       \lemref{t:4:d}. Note that the new algorithm is no longer an
       improvement in dimension $8$. We are unaware of any better
       approximation algorithms (except for running the exact
       algorithm for Tverberg's point, which requires $n^{O(d^2)}$
       time).}
    \figlab{low:dim}
\end{figure}

\paragraph*{Our results.}
The known and new results are summarized in \figref{results}. In
\secref{prelims}, we review preliminary information and known results,
which include the following.
\begin{compactenumI}[leftmargin=0.9cm]
    \smallskip%
    \item \textbf{An exact algorithm.}  The proof of Tverberg's
    theorem is constructive and leads to an algorithm with running
    time $O(n^{d(d+1)+1})$. It seems that the algorithm has not been
    described and analyzed explicitly in the literature. For the sake
    of completeness, we provide this analysis in
    \secref{tverberg:exact}.

    \smallskip%
    \item \textbf{In two dimensions.}  Given a set $\PS$ of $n$ points
    in the plane and a query point $\pq$ of Tukey depth $k$, Birch's
    theorem \cite{b-o3pp-59} implies that $\pq$ can be covered by
    $\min( k, \floor{n/3})$ vertex-disjoint triangles of $\PS$.  One
    can compute $k$ and this triangle cover in $O(n + k\log k)$ time,
    and use them to compute a Tverberg point of depth $\floor{n/3}$ in
    $O(n \log n)$ time. For the sake completeness, this is described
    in \secref{2:d}.
\end{compactenumI}
\medskip%
\noindent
In \secref{improved}, we provide improved algorithms for computing
Tverberg points and partitions.
\begin{compactenumI}[leftmargin=0.9cm]
    \smallskip%
    \item \textbf{Projections in low dimensions.}  We use projections
    to find improved approximation algorithms in dimensions $3$ to
    $7$, see \figref{low:dim}.  For example, in three dimensions, one
    can compute a point with \Tdepth $n/6$ in $O(n \log n)$ time.

    \smallskip%
    \item \textbf{An improved quasi-polynomial algorithm.}  We modify
    the algorithm of Miller and Sheehy to use a buffer of free
    points. Coupled with the algorithm of Mulzer and Werner
    \cite{mw-atplt-13}, this idea yields an algorithm that computes a
    point of \Tdepth $\geq (1-\delta)n/2(d+1)^2$ in
    $d^{O(\log( d/\delta))}n$ time. This improves the approximation
    quality of the algorithm of \cite{mw-atplt-13} by a factor of
    $2(d+1)$, while keeping (essentially) the same running time.

    \smallskip%
    \item \textbf{A strongly polynomial algorithm.}  In
    \secref{strong}, we present the first strongly polynomial
    approximation algorithm for Tverberg points, with the following
    caveats:
    \begin{compactenumi}
        \item the algorithm is randomized, and might fail,

        \smallskip%
        \item one version returns a Tverberg partition, but not a
        point that lies in its intersection,

        \smallskip%
        \item the other (inferior) version returns a Tverberg point
        and a partition realizing it, but not the convex combination
        of the Tverberg point for each set in the partition.
    \end{compactenumi}

    Specifically, one can compute a partition of $\PS$ into
    $n/O(d^2 \log d)$ sets, such that the intersection of their convex
    hulls is nonempty (with probability close to one), but without
    finding a point in the intersection.  Alternatively, one can also
    compute a Tverberg point, but the number of sets in the partition
    decreases to $n/O(d^3 \log d)$.

    \smallskip%
    \item \textbf{A weakly polynomial algorithm.} %
    Revisiting an idea of Rolnick and \Soberon \cite{rs-aatt-16}, we
    use algorithms for solving \LP{}s. The resulting running time is
    either weakly polynomial (depending logarithmically on the
    relative sizes of the numbers in the input) or super-polynomial,
    depending on the \LP solver. In particular, the randomized,
    strongly polynomial algorithms described above can be converted
    into constructive algorithms that compute the convex combination
    of the Tverberg point over each set in its partition.  Having
    computed approximate Tverberg points of \Tdepth
    $\geq n/O(d^2 \log d)$, we can feed them into the buffered version
    of Miller and Sheehy's algorithm to compute Tverberg points of
    depth $\geq (1-\delta)n/2(d+1)^2$. This takes
    $d^{O( \log \log (d/\delta))} \Ow(n^{5/2})$ time, where $\Ow$
    hides polylogarithmic terms in the size of the numbers involved,
    see \remref{width}.

    \smallskip%
    \item \textbf{Faster approximation in low dimensions.} %
    One can compute (or approximate) a centerpoint, then repeatedly
    extract simplices covering it until the centerpoint is
    exposed. This leads to an $O_d(n^2)$ approximation algorithm
    \cite{rs-aatt-16}. Since $O_d$ hides constants that depend badly
    on $d$, this method is most useful in low dimensions. By random
    sampling, we can speed up this algorithm to $O_d(n \log^2 n)$
    time.
\end{compactenumI}

\section{Background, preliminaries and known results}
\seclab{prelims}

In this section, we cover known results in the literature about
Tverberg partition. We provide proofs for many of the claims or the
sake of completeness, as we use them later in the paper.

\subsection{Definitions}
\begin{definition}
    \deflab{log}%
    A \emphi{Tverberg partition} (or a \emphi{log}) of a set of points
    $\PS \subseteq \Re^d$, for a point $\pq$, is a set
    $\Log = \{\PS_1, \ldots, \PS_k\} $ of vertex-disjoint subsets of
    $\PS$, each containing at most $d+1$ points, such that
    $\pq \in \CHX{\PS_i}$, for all $i$.  The \emphi{rank} of $\Log$ is
    $k=|\Log|$. The maximum rank of any log of $\pq$ is the
    \emphi{Tverberg depth} (or \emphi{\Tdepth}) of $\pq$.

    A set $\PS_j$ in a log is a \emphi{batch}.  For every batch
    $\PS_j =\{\pp_1, \ldots, \pp_{d+1}\}$ in the log, we also store
    the convex coefficients $\alpha_1, \ldots, \alpha_{d+1} \geq 0$
    such that $\sum_i \alpha_i \pp_i = \pq$ and $\sum_i \alpha_i =
    1$. A pair $(\pq, \Log)$ of a point and its log is a \emphi{site}.
\end{definition}

Tverberg's theorem states that, for any set of $n$ points in $\Re^d$,
there is a point in $\Re^d$ with Tverberg depth $\ceil{n/(d+1)}$.  For
simplicity, we assume the input is in general position.

\subsection{An exact algorithm}
\seclab{tverberg:exact}

The constructive proof of Tverberg and \Vercica \cite{tv-grths-93}
implies an algorithm for computing exact Tverberg points. We include
the proof for the sake of completeness, as we also provide an analysis
of the running time.

\begin{lemma}[\cite{tv-grths-93}]
    \lemlab{tverberg:exact:alg}%
    Let $\PS$ be a set of $n$ points in $\Re^d$. In
    \begin{math}
        d^{O(d^2)}n^{d(d+1) + 1}
    \end{math}
    time, one can compute a point $\pq$ and a partition of $\PS$ into
    disjoint sets $\PS_1, \ldots, \PS_r$ such that
    $\pq \in \bigcap_i \CHX{\PS_i}$, where $r=\ceil{ n/(d+1)}$.
\end{lemma}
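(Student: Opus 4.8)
The plan is to follow the inductive construction underlying the Tverberg–Vrećica proof and carefully bookkeep the cost of each step. First I would set up the induction on $r = \lceil n/(d+1)\rceil$. The base case $r = 1$ is trivial: a single batch equal to $\PS$ (or any $d+1$ of its points containing a chosen point), which costs $O(d^3)$ via \Caratheodory-type sparsification (\lemref{c:lp}). For the inductive step, suppose we already have a Tverberg partition of a subset of $\PS$ into $r-1$ batches, realizing a common point $\pq$, together with the convex coefficients stored for each batch as in \defref{log}. We then bring in a fresh block of $d+1$ points, forming a candidate new batch, and we wish to merge it in so that the rank increases to $r$ while all $r$ convex hulls still share a common point. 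If $\pq$ already lies in the hull of the new block, we are done; otherwise we must perturb $\pq$ and redistribute points among batches. The core subroutine here is the ``lifting / Radon-type'' move: among the current $r-1$ batches plus the new block we have more than $d+1$ batch-representatives (the stored convex combinations act as points in a suitable parameter space), so by a Radon/Tverberg-type pigeonhole in dimension $d$ there is an affine dependence allowing us to move the common point into the hull of the new block at the cost of expelling at most one point from some existing batch and re-inserting it. Re-solving the affine system for the new coefficients is a linear-algebra computation in $O(d^3)$ per batch touched.

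The key accounting step is to see where the $n^{d(d+1)+1}$ comes from: at each of the $r = O(n/d)$ stages the construction may need to \emph{try all ways of choosing which points land in the current batch} (the proof is existential about which $d+1$-subset works), and naively this is $\binom{n}{d+1}^{O(d)} = n^{O(d^2)}$ choices per stage; more precisely, the Tverberg–Vrećica argument guesses, for the $i$-th new batch, an assignment that costs $n^{d(d+1)}$ enumeration, and there are $r \le n$ stages, yielding $n^{d(d+1)+1}$ overall. The $d^{O(d^2)}$ factor absorbs the per-choice linear-algebra work (solving systems of size $O(d)$ over the $O(d^2)$ coefficients stored across batches, and the combinatorial bookkeeping of the redistribution). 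So the plan is: (i) state the recursive procedure precisely, (ii) prove correctness by the Tverberg–Vrećica inductive invariant — after stage $i$ we have a rank-$i$ log of a point $\pq_i$ on the first $i(d+1)$ points — and (iii) tally the cost as $\sum_{i=1}^{r}\bigl(\text{enumeration }n^{d(d+1)}\bigr)\cdot\bigl(\text{per-trial }d^{O(d^2)}\bigr) = d^{O(d^2)} n^{d(d+1)+1}$.

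The main obstacle I anticipate is making the redistribution step fully rigorous: the Tverberg–Vrećica proof is a slick topological/combinatorial induction, and turning ``there exists a valid reassignment of at most one point per batch'' into an explicit, correctly-bounded algorithm requires care — specifically, showing that each merge step moves the shared point into the new batch's hull while only perturbing $O(1)$ batches, and that the stored convex coefficients can always be recomputed (i.e.\ the relevant linear systems are nonsingular under the general-position assumption). A secondary subtlety is handling the ``remainder'' when $n$ is not divisible by $d+1$: the last batch has fewer than $d+1$ points, so I would treat the final stage separately, which does not affect the asymptotic bound. Once the merge step is pinned down, correctness is immediate from the invariant and the running-time bound follows by the summation above.
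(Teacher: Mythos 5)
There is a genuine gap, and it sits exactly at the step you flag as the ``main obstacle.'' Your plan is an incremental induction: maintain a rank-$(i-1)$ Tverberg partition of the first $(i-1)(d+1)$ points and merge in a fresh block of $d+1$ points, claiming that a ``Radon-type pigeonhole'' lets you move the common point into the new block's hull while expelling at most one point from $O(1)$ existing batches. No such local merge step exists in general: a Tverberg point of a subset need not extend to a Tverberg point of a superset by reassigning a bounded number of points, and the stored convex coefficients of the $r-1$ old batches do not form a configuration of more than $d+1$ points in $\Re^d$ to which Radon's theorem usefully applies for this purpose. If such a greedy extension worked, Tverberg's theorem would admit a one-paragraph inductive proof, which it does not. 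Your running-time accounting is likewise unsupported: you assert that ``the Tverberg--Vre{\'c}ica argument guesses, for the $i$-th new batch, an assignment that costs $n^{d(d+1)}$ enumeration,'' but there is no specification of what is being enumerated or why a valid choice must exist among the candidates, so the bound $d^{O(d^2)}n^{d(d+1)+1}$ is being pattern-matched to the target rather than derived.

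The paper's proof is structured entirely differently, as a local search with a potential function. One starts from an \emph{arbitrary} partition of $\PS$ into $r$ sets of size $d+1$ and repeatedly computes the minimum-radius ball $\Ball$ meeting all hulls $\CHX{\PS_1},\ldots,\CHX{\PS_r}$. While the radius is positive, a hyperplane argument at the (at most $d+1$) tight sets produces a single exchange of two ``free'' points between two tight sets that strictly shrinks $\Ball$; when the radius reaches zero the center is the Tverberg point. The exponent $d(d+1)$ counts the number of distinct candidate balls (each determined by $d+1$ tight sets with $d$ relevant points apiece), which bounds the number of iterations, and each iteration costs $d^{O(d^2)}n$ via an \LP-type computation --- this is where the factors in the stated bound actually come from. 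To repair your write-up you would need to replace the inductive merge with this exchange argument (or supply a genuinely new proof that a rank-increasing local merge always exists, which I do not believe is possible).
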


\begin{proof}
    The constructive proof works by showing that a local search
    through the space of partitions stops when arriving at the desired
    partition.  To simplify the exposition, we assume that $\PS$ is in
    general position, and $n = r (d+1)$.  The algorithm starts with an
    arbitrary partition of $\PS$ into $r$ sets $\PS_1, \ldots, \PS_r$,
    all of size $d+1$.

    \begin{figure}[h]
        \centering%
        \includegraphics{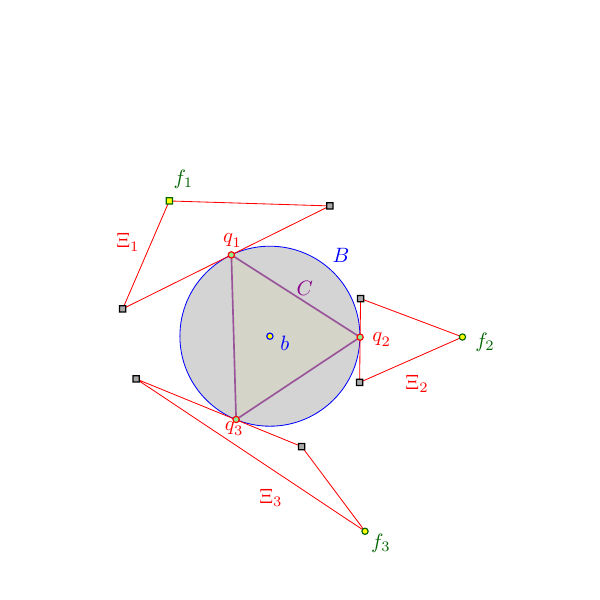}
        \caption{The configuration before the exchange}
        \figlab{before:exchange}
    \end{figure}

    In each iteration, the algorithm computes the ball $\Ball$ of
    minimum radius that intersects all the convex hulls
    $\CHX{\PS_1}, \ldots, \CHX{\PS_r}$.  Since $\PS$ is in general
    position, $\Ball$ is tangent to the hulls of at most $k \leq d+1$
    sets of the partition, say $\SSet_1, \ldots, \SSet_k$.  The $k$
    sets defining this ball are \emphw{tight}. Let
    $\pq_i = \Ball \cap \CHX{\SSet_i}$, for $i=1,\ldots, k$, and let
    $\pb$ be the center of $\Ball$, see
    \figref{before:exchange}. Observe that
    $\pb \in C = \CHX{ \{ \pq_1, \ldots, \pq_k\}}$ as otherwise one
    can decrease the radius of $\Ball$ by moving its center towards
    $C$ (which contradicts the minimality of $\Ball$).

    For $i=1,\ldots, k$, the point $\pq_i$ lies on the boundary of the
    simplex $\CHX{\SSet_i}$. As such, there exists a ``free'' point
    $\pf_i \in \SSet_i$, such that $\pq_i \in \CHX{\SSet_i -
       \pf_i}$. Let $\hplane$ be the hyperplane passing through $\pb$
    which is orthogonal to $\pf_1 - \pb$. Let $\hplane^+$ be the open
    halfspace bounded by $\hplane$ that does not contain $\pf_1$.  By
    our general position assumption, no point of
    $\QS = \{\pq_1, \ldots, \pq_k\}$ lies on $\hplane$ (to see that,
    consider perturbing the points randomly, and observe that the
    probability for this event to happen is zero). As such, if
    $\hplane^+ \cap \QS$ is empty, then $\QS$ is contained in an open
    hemisphere $\bd \Ball \cap \hplane^- $, where $\hplane^-$ is the
    other open halfspace bounded by $\hplane$. But then, $\Ball$ can
    be shrunk further. As such, we conclude that $\QS \cap \hplane^+$
    is not empty. Suppose that $\pq_2$ is in this intersection, see
    \figref{exchange}.
    \begin{figure}
        \includegraphics[page=2]{figs/exact_alg} \hfill%
        \includegraphics[page=3]{figs/exact_alg}
        \caption{A beneficial exchange.}
        \figlab{exchange}
    \end{figure}

    In particular, consider the exchange of $\pf_1$ and $\pf_2$:
    \begin{equation*}
        \SSet_1' = \SSet_1 - \pf_1 + \pf_2
        \qquad\text{and}\qquad%
        \SSet_2' = \SSet_2 - \pf_2 + \pf_1.
    \end{equation*}
    The segment $\seg = \pf_1 \pq_2 \subseteq \CHX{\SSet_2'}$
    intersects the interior of $\Ball$, so $\Ball$ can be shrunk after
    the exchange.

    In each iteration, the radius of the ball strictly decreases.  As
    such, the algorithm can find exchanges which allow $\Ball$ to
    shrink, as long as the radius of $\Ball$ is larger than zero. When
    it stops, the algorithm will have computed a partition of $P$ into
    sets whose convex hulls all intersect.

    The ball maintained by the algorithm is uniquely defined by the
    $d+1$ tight sets, where each tight set contains exactly $d+1$
    points. As such, the number of possible balls considered by the
    algorithm, and thus the number of iterations, is bounded by
    $O( n^{(d+1)^2})$. One can get a slightly better bound, by
    observing that the free point in each tight set is irrelevant in
    defining the smallest ball. As such, the number of different balls
    that might be computed by the algorithm is bounded by
    $O(n^{d(d+1)})$.

    \smallskip%
    We next provide some low-level details. As a starting point, one
    needs the following two geometric primitives:
    \begin{compactenumI}
        \smallskip%
        \item Given a point $\pp$ (or a ball), and a simplex in
        $\Re^d$, compute the distance from the point to the simplex. A
        brute force algorithm for this works in $2^{d} d^{O(1)}$ time,
        as deciding if a point is in a simplex can be done in $O(d^4)$
        time by computing $d+1$ determinants. If $\pp$ is not in the
        interior of the simplex, then we recurse on each of the $d+1$
        facets of the simplex, projecting $\pp$ to each subspace
        spanning this subset, and compute the nearest-point problem
        recursively, returning the best candidate returned.  Since
        there are $2^{d+1}$ subsimplices, the running time stated
        follows.

        \smallskip%
        \item Given $d+1$ simplices each defined by $d+1$ points of
        $\PS \subseteq \Re^d$, the task is to compute the minimum
        radius ball which intersects all of them in $O(1)$ time.  It
        can be easily written as a convex program in $O(d^2)$
        variables, which such be solved exactly in $d^{O(d^2)}$
        time. An alternative way to get a similar running time is to
        explicitly write down the distance function induced by each
        simplex, then compute the lowest point in the upper envelope
        of the resulting set of functions. There are $D \leq d^{O(d)}$
        functions, so the complexity of the arrangement of their
        images is in $\Re^{d+1}$ is $O(D^{d+1})$.  This arrangement,
        and thus the lower point in the upper envelope, can be
        computed in the time stated.
    \end{compactenumI}

    \medskip%
    \noindent%
    At the beginning of each iteration, the algorithm computes the
    smallest ball intersecting the convex hulls of the sets in the
    \emph{current} partition as follows: This is an \LP type problem,
    and using the above geometric primitives, one can deploy any
    linear-time algorithm for this problem \cite{h-gaa-11,
       c-lvali-95}. The resulting running time is $d^{O(d^2)} n$.
    This also returns (i) the tight sets, (ii) the ball $\Ball$ and
    (iii) the points $\pq_1, \ldots, \pq_{d+1}$. It is now
    straightforward to compute the free point in each simplex and do
    the exchange in polynomial time in $d$. Putting everything
    together, we get running time
    \begin{math}
        n^{d(d+1)} d^{O(d^2)} n = d^{O(d^2)}n^{d(d+1) + 1} .
    \end{math}
\end{proof}

\subsection{In two dimensions}
\seclab{2:d}%

\subsubsection{Computing Tukey depth in the plane}

We first review Tukey depth, which is closely related to Tverberg
depth in two dimensions.
\begin{defn}
    The \emphi{Tukey depth} of a point $\pq$ in a set
    $\PS \subseteq \Re^d$, denoted by $\depthTKY{\pq}{\PS}$, is the
    minimum number of points contained in any closed halfspace
    containing $\pq$.
\end{defn}

The following result is implicit in the work of Chan \cite[Theorem
5.2]{c-garot-99}.  Chan solves the decision version of the Tukey depth
problem, while we need to compute it explicitly, resulting in a more
involved algorithm.

\begin{lemma}
    \lemlab{tukey:depth:2:d}%
    Given a set $\PS$ of $n$ points in the plane and a query point
    $\pq$, such that $\PS \cup \{ \pq\}$ is in general position, one
    can compute, in $O(n + k \log k)$ time, the Tukey depth $k$ of
    $\pq$ in $\PS$. The algorithm also computes the halfplane
    realizing this depth.%
\end{lemma}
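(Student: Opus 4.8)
\medskip
\noindent\emph{Proof plan.}\enspace Translate so that $\pq$ is the origin. For a point $\pp\in\PS$, let $\Line_\pp$ be the line through $\pq$ and $\pp$; by the general-position assumption no three points of $\PS\cup\{\pq\}$ are collinear, so $\Line_\pp$ contains no other point of $\PS$ and splits $\PS\setminus\{\pp\}$ into two subsets of sizes $a_\pp$ and $b_\pp=n-1-a_\pp$. The first step is to establish the identity
\[
    \depthTKY{\pq}{\PS}\;=\;\min_{\pp\in\PS}\,\min\pth{a_\pp,\,b_\pp}.
\]
The inequality ``$\le$'' is immediate: rotating $\Line_\pp$ by an infinitesimal amount about $\pq$ pushes $\pp$ strictly to one side, and the resulting closed halfplane contains $\pq$ together with exactly $a_\pp$ (or $b_\pp$) points of $\PS$. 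For ``$\ge$'', take a closed halfplane $\hplane$ with $\pq\in\hplane$ realizing the depth; we may assume $\pq\in\bd\hplane$ (moving $\bd\hplane$ toward $\pq$ only loses points), and then rotating $\bd\hplane$ about $\pq$ until it first meets some $\pp\in\PS$ leaves the number of points of $\PS$ in the open halfplane unchanged, so that number equals $a_\pp$ or $b_\pp$. This identity already gives an $O(n\log n)$ algorithm: sort $\PS$ by angle around $\pq$ (via cross-product comparisons) and sweep an ``antipodal'' pointer to compute all the $a_\pp$ in $O(n)$ further time; the minimizing $\pp$, together with its slightly rotated line, certifies the answer.

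\smallskip
To reduce the running time to $O(n+k\log k)$, I would turn Chan's decision procedure~\cite{c-garot-99} into an output-sensitive prune-and-search. It is convenient to use the dual reading of the identity above: $k$ equals $n$ minus the largest number of points of $\PS$ whose directions from $\pq$ lie in a common open half-circle. The plan is to maintain a shrinking ``active'' set $\QS\subseteq\PS$ of candidate points, together with two counters recording how many already-eliminated points must lie on each side of the (still unknown) optimal line. In each round one chooses a test line through $\pq$ by a linear-time selection on the directions of $\QS$, counts the points of $\QS$ on its two sides, and — when the split is sufficiently unbalanced relative to the current best upper bound on $k$ — discards a constant fraction of $\QS$ consisting of points that provably lie on the heavy side of every near-optimal line, updating the counters accordingly. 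Each round costs $O(\cardin{\QS})$, so all rounds together cost $O(n)$ and terminate with $\cardin{\QS}=O(k)$; one then sorts the $O(k)$ survivors by angle around $\pq$ and runs the antipodal-pointer sweep on them, folding in the stored counters, in $O(k\log k)$ additional time. This recovers both $k$ and a halfplane realizing it.

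\smallskip
The main obstacle is the design and correctness of the pruning step: one must pin down exactly which points can be removed in a round so that \emph{both} directions of the depth identity remain valid for the reduced instance, show that once $\cardin{\QS}=\omega(k)$ a constant fraction is always removable, and verify that the leftover instance has size $O(k)$. This is precisely the extra work needed to upgrade Chan's procedure — which only answers ``is the depth at most $\kappa$?'' — into one that reports the depth itself, and is what makes the algorithm ``more involved.'' The degenerate regime $k=\Theta(n)$, where the plain $O(n\log n)$ algorithm already meets the stated bound, and the bookkeeping of the side-counters across rounds, are routine.
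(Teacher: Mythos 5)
Your first paragraph (the rotating-line identity and the resulting $O(n\log n)$ angular sweep) is essentially correct, modulo a harmless off-by-one for the point lying on the boundary line, but it only reproduces the easy bound. The entire content of the lemma is the output-sensitive bound $O(n+k\log k)$, and for that your proposal does not contain a proof: the pruning rule --- which points may be discarded in a round, why a constant fraction is removable whenever $\cardin{\QS}=\omega(k)$, and why both directions of the depth identity survive the removal --- is exactly the step you yourself label ``the main obstacle'' and leave open. As written, the prune-and-search is a plan, not an argument; in particular, deciding that a split is ``sufficiently unbalanced relative to the current best upper bound on $k$'' presupposes an upper bound on $k$ that you have not explained how to obtain. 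There is also a structural difficulty specific to your primal formulation: the candidate directions live on a circle, so there is no single linear order in which the ``safely removable'' points form a canonical prefix, which is what makes a clean pruning criterion hard to state there.

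The paper's proof avoids all of this by dualizing and decoupling. The point $\pq$ becomes a line $\dualX{\pq}$, and each of the two subproblems (count dual lines strictly below, respectively above, a point of $\dualX{\pq}$) becomes one-dimensional \LP with violations: a collection of rays on a line, split into right-pointing and left-pointing ones whose heads are \emph{linearly ordered}. In that setting the removable candidates are always a prefix (the $m$ most extreme heads on each side), so no adaptive unbalancedness test is needed: the algorithm builds the nested prefix sets $\RS_i,\LS_i$ of sizes $\floor{n/2^i}$ by repeated median selection in $O(n)$ total time, and the largest index $j$ at which feasibility still holds simultaneously yields a $4$-approximation to $k$ and a certified residual instance of size $O(k)$, which is then sorted and swept in $O(k\log k)$ time. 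Finally, the two one-dimensional answers (up and down) are combined using their constant-factor approximations so that at most one of them ever needs to be resolved exactly when they are far apart. If you want to complete your write-up, the shortest route is to adopt this reduction to two ordered one-dimensional instances rather than to try to design a primal pruning rule.
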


\begin{proof}
    In the dual, $\dualX{\pq}$ is a line, and the task is to find a
    point on this line which minimizes the number of lines of
    $\dualX{\PS}$ (i.e., set of lines dual to the points of $\PS$)
    strictly below it.  More precisely, one has to also solve the
    upward version, and return the minimum of the two
    solutions. Handling the downward version first, every point
    $\pp \in \PS$ has a dual line $\dualX{\pp}$. The portion of
    $\dualX{\pq}$ that lies above $\dualX{\pp}$ is a closed ray on
    $\dualX{\pq}$.  As such, we have a set of rays on the line (which
    can be interpreted as the $x$-axis), and the task is to find a
    point on the line contained in the minimum number of rays. (This
    is known as linear programming with violations in one dimension.)

    Let $\RSr$ (resp. $\LSr$) be the set of points that corresponds to
    heads of rays pointing to the right (resp. to the left) by
    $\dualX{\PS}$ on $\dualX{\pq}$. Let $\RS_i$ be the set of
    $\floor{n/2^i}$ rightmost points of $\RS_{\Mapsto}$, for
    $i=0,\ldots, h = \floor{ \log_2 n}$. Using median selection, each
    set $\RS_i$ can be computed from $\RS_{i-1}$ in
    $\cardin{\RS_{i-1}}$ time. As such, all these sets can be computed
    in $\sum_i O(n/2^i) = O( n)$ time. The sets
    $\LS_0, \LS_1, \ldots, \LS_h$ are computed in a similar fashion.
    For all $i$, we also compute the rightmost point $\rs_{i-1}$ of
    $\RS_{i-1} \setminus \RS_i$ (which is the rightmost point in
    $\RS \setminus \RS_i$). Similarly, $\ls_i$ is the leftmost point
    of $\LS_{i-1} \setminus \LS_i$, for $i=1, \ldots, h$. Let $\rs_0$
    (resp. $\ls_0$) be the rightmost (resp. leftmost) point of $\RS_0$
    (resp. $\LS_0$).

    Now, compute the maximum $j$ such that $\rs_j$ is to the left of
    $\ls_j$. Observe that
    $(\RS \setminus \RS_j) \cup (\LS \setminus \LS_j)$ form a set of
    rays that their intersection is non-empty (i.e.,
    feasible). Similarly, the set of rays
    $(\RS \setminus \RS_{j+1}) \cup (\LS \setminus \LS_{j+1})$ is not
    feasible, and any set of feasible rays must be created by removing
    at least
    $\cardin{\RS_{j+1}} = \cardin{\LS_{j+1}} = \floor{n/2^{j+1}}$ rays
    from $\RSr \cup \LSr$. Hence, in linear time, we have computed a
    $4$-approximation to the minimum number of rays that must be
    removed for feasibility. In particular, if $S$ is a set of rays
    such that $(\RSr \cup \LSr) \setminus S$ is feasible, then
    $(\RS_{j-1} \cup \LS_{j-1}) \setminus S$ is also feasible.
    Namely, if the minimum size of such a removeable set $S$ is $k$,
    then we have computed a set of $O(k)$ rays, such that it suffices
    to solve the problem on this smaller set.

    In the second stage, we solve the problem on
    $\RS_{j-1} \cup \LS_{j-1}$. We first sort the points, and then for
    each point in this set, we compute how many rays must be removed
    before it lies in the intersection of the remaining rays.  Given a
    location $\pp$ on the line, we need to remove all the rays of
    $\RS_{j-1}$ (resp.  $\LS_{j-1}$) whose heads lie to the right
    (resp. left) of $\pp$. This can be done in $O( k \log k)$ time by
    sweeping from left to right and keeping track of the rays that
    need to be removed.

    As such, we can solve the \LP with violations on the line in
    $O( n + k \log k)$ time, where $k$ is the minimum number of
    violated constraints. A $4$-approximation to $k$ can be computed
    in $O(n)$ time.

    Now we return to the Tukey depth problem. First we compute a
    $4$-approximation, denoted by $\widetilde{k_{\downarrow}}$, for
    the minimum number of lines crossed by a vertical ray shot down
    from a point on $\dualX{\pq}$. Similarly, we compute
    $\widetilde{k_{\uparrow}}$. If
    $4\widetilde{k_\downarrow} < \widetilde{k_\uparrow}$, then we
    compute $k_\downarrow$ exactly and return the point on
    $\dualX{\pq}$ that realizes it. (In the primal, this corresponds
    to a closed halfspace containing $\pq$ along with exactly
    $k_\downarrow$ points of $\PS$.)  Similarly, if
    $\widetilde{k_\downarrow} > 4\widetilde{k_\uparrow}$, then we
    compute $k_\uparrow$ exactly, and return it as the desired
    solution. In the remaining case, we compute both quantities and
    return the minimum of the two.
\end{proof}

\subsubsection{Computing a log realizing the Tukey depth of a point}

For shallow points in the plane, the Tukey and Tverberg depths are
equivalent, and we can compute the associated Tverberg partition.
This is essentially implied by the work of Birch \cite{b-o3pp-59}, and
we include the details for the sake of completeness.

\begin{lemma}
    \lemlab{r:shallow}%
    Let $\PS$ be a set of $n$ points in the plane, let $\pq$ be a
    query point such that $\PS \cup \{ \pq \}$ is in general position,
    and suppose that $k = \depthTKY{\pq}{\PS} \leq n/3$. Then one can
    compute a log for $q$ of rank $k$ in $O(n + k \log k)$ time.
\end{lemma}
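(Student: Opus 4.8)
The plan is to reduce to Birch's construction \cite{b-o3pp-59} applied to a carefully chosen set of $3k$ points, where the choice is driven by the depth-realizing halfplane. First, invoke \lemref{tukey:depth:2:d} to compute $k=\depthTKY{\pq}{\PS}$ together with a closed halfplane $H$ with $\cardin{H\cap\PS}=k$, in $O(n+k\log k)$ time; we may assume $k\ge 1$ (the case $k=0$ yields the empty log). Translating $\bd H$ toward $\pq$ never increases the number of enclosed points and never drops it below $k$ (every closed halfplane containing $\pq$ contains at least $k$ points of $\PS$), so we may assume $\pq\in\bd H$ while still $\cardin{H\cap\PS}=k$; and then $\bd H$ contains no point of $\PS$, since otherwise rotating $\bd H$ about $\pq$ infinitesimally so as to eject that single point (no other point of $\PS$ is collinear with $\pq$ and it, by general position) would leave only $k-1$ points in a closed halfplane through $\pq$, contradicting $\depthTKY{\pq}{\PS}=k$. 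Hence $\PS=B\sqcup C$ with $B=H\cap\PS$ strictly on one side of $\bd H$, $\cardin B=k$, and $C=\PS\setminus H$ strictly on the other side, $\cardin C=n-k=:m\ge 2k$ (using $k\le n/3$).

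Work in polar coordinates centred at $\pq$ with $\bd H$ as the $x$-axis, so the direction angles of $B$ lie in $(0,\pi)$ and those of $C$ in $(\pi,2\pi)$. Sort $B$ as $b_1,\dots,b_k$ by increasing angle $\beta_1<\dots<\beta_k$, and set $\tau_i=\beta_i+\pi\in(\pi,2\pi)$, the direction opposite to $b_i$. A short computation with the three circular arcs cut out by three directions shows that, for $c,c'\in C$ with $\mathrm{dir}(c)<\mathrm{dir}(c')$, one has $\pq\in\CHX{\brc{b_i,c,c'}}$ iff $\mathrm{dir}(c)<\tau_i<\mathrm{dir}(c')$. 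Accordingly, let $C_{\mathrm{lo}}$ and $C_{\mathrm{hi}}$ be the $k$ points of $C$ of smallest, resp.\ largest, direction angle (disjoint since $\cardin C\ge 2k$), listed by increasing angle as $c_1<\dots<c_k$ and $c_1'<\dots<c_k'$; these are extracted from $C$ in $O(n)$ time and sorted in $O(k\log k)$ time. Put $\Log=\brc{T_1,\dots,T_k}$ with $T_i=\brc{b_i,c_i,c_i'}$; these sets are vertex-disjoint, and it remains to verify $\pq\in\CHX{T_i}$ for every $i$.

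Fix $i$ and let $\ell_i=\cardin{\Set{c\in C}{\mathrm{dir}(c)<\tau_i}}$. Counting the points of $\PS$ in the two closed halfplanes bounded by the line through $\pq$ and $b_i$: one contains $b_i,b_{i+1},\dots,b_k$ together with the $\ell_i$ points of $C$ of direction in $(\pi,\tau_i)$, the other contains $b_1,\dots,b_i$ together with the $m-\ell_i$ points of $C$ of direction in $(\tau_i,2\pi)$, and since each has at least $k$ points we get $i-1\le\ell_i\le m-k+i$. The crucial point is that \emph{both inequalities are strict}: equality in either one makes the corresponding halfplane contain exactly $k$ points of $\PS$ while carrying $b_i$ on its bounding line, and — since no point of $\PS$ other than $b_i$ lies on that line, by general position — an infinitesimal rotation about $\pq$ ejecting $b_i$ produces a closed halfplane through $\pq$ with only $k-1$ points, contradicting $\depthTKY{\pq}{\PS}=k$. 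Thus $i\le\ell_i\le m-k+i-1$. Since $c_i$ is the $i$-th smallest element of $C$ and $\ell_i\ge i$, we get $\mathrm{dir}(c_i)<\tau_i$; since $c_i'$ is the $(m-k+i)$-th smallest element of $C$ and $\ell_i\le m-k+i-1$, we get $\mathrm{dir}(c_i')>\tau_i$ (no point of $C$ has direction exactly $\tau_i$, again by general position); and $\mathrm{dir}(c_i)<\mathrm{dir}(c_i')$ since $m>k$. Therefore $\pq\in\CHX{T_i}$.

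Finally, for each triangle solve the corresponding $3\times3$ linear system — equivalently apply the algorithmic \Caratheodory routine of \lemref{c:lp} — to obtain the convex coefficients of $\pq$ with respect to $T_i$ in $O(1)$ time each; this completes a log of $\pq$ of rank $k$, and the rank cannot exceed $k$ because Tverberg depth is at most Tukey depth. The total time is dominated by \lemref{tukey:depth:2:d} and the selection and sorting of $O(k)$ points of $C$, namely $O(n+k\log k)$. I expect the delicate step to be the strictness of the bounds on $\ell_i$: the naive halfplane counts only give $i-1\le\ell_i\le m-k+i$, which is off by one at each end, and recovering the missing unit on both sides is exactly where the tightness of $H$ is used (through the rotation argument).
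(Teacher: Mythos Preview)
Your proposal is correct and follows essentially the same construction as the paper: split $\PS$ by the depth-realizing line through $\pq$, angularly sort the $k$ points on the shallow side, select the $k$ extreme points on each end of the deep side, and form triangles $\{b_i,c_i,c_i'\}$ (the paper's $\triangle p_i t_i t_{k+i}$). Your verification via the counts $\ell_i$ and the rotation argument for strictness is in fact more careful than the paper's, which asserts $\cardin{\PS\cap\ell_i^+}<k$ directly; your treatment makes explicit why the boundary point $b_i$ does not spoil the count.
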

\begin{proof}
    Using the algorithm of \lemref{tukey:depth:2:d}, compute the Tukey
    depth of $\pq$ and the closed halfplane $\hp$ realizing it, where
    $\pq$ lies on the line $\hL$ bounding $\hp$. This takes
    $O(n + k \log k)$ time.  By translation and rotation, we can
    assume that $\pq$ is the origin and $\hL$ is the $x$-axis. Let
    $\PS^+ = \PS \cap \hp$ be the $k$ points realizing the Tukey depth
    of $\pq$, and let $\PS^- = \PS \setminus \PS^+$ be the set of
    points below the $x$-axis, so that $\cardin{\PS^-} = n-k \geq 2k$.

    Consider the counterclockwise order of the points of $\PS^-$
    starting from the negative side of the $x$-axis. Let $\TS$ be the
    set containing the first and last $k$ points in this order,
    computed in $O(n)$ time by performing median selection twice. Let
    $\TS = \{ \pt_1, \ldots, \pt_{2k} \}$ be the points of $\TS$
    sorted in counterclockwise order. Similarly, let
    $\PS^+ = \{ \pp_1, \ldots, \pp_k\}$ be the points of $\PS^+$
    sorted in counterclockwise order starting from the positive side
    of the $x$-axis, see \figref{r:shallow}.

    \begin{figure}[b]
        \begin{minipage}{0.9999\linewidth}
            \begin{minipage}{0.3\linewidth}

                \bigskip

                \captionof{figure}{Illustration of the proof of
                   \lemref{r:shallow}.}  %
                \figlab{r:shallow}

            \end{minipage}
            \hfill
            \begin{minipage}{0.34\linewidth}
                \IncludeGraphics[width=0.95\linewidth]{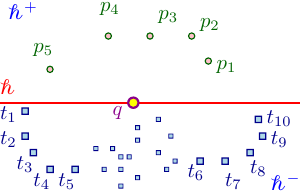}
            \end{minipage}
            \hfill
            \begin{minipage}{0.34\linewidth}
                \IncludeGraphics[page=2,width=0.95\linewidth]{figs/triangles}
            \end{minipage}

        \end{minipage}
    \end{figure}

    Let $\triangle_i = \triangle \pp_i \pt_i \pt_{k+i}$, for
    $i=1,\ldots, k$. We claim that $\pq \in \triangle_i$ for all
    $i$. To this end, let $\Line_i$ be the line passing through the
    origin and $\pp_i$, and let $\Line_i^+$ denote the halfspace it
    induces to the left of the vector $\overrightarrow{\pq
       \pp_i}$. Then $\Line_i^+$ must contain $\pt_i$, as otherwise,
    $\cardin{\PS \cap \Line_i^+} < k$, contradicting the Tukey depth
    of $\pq$. Namely, the segment $\pp_i \pt_i$ intersects the
    negative side of the $x$-axis. A symmetric argument, applied to
    the complement halfplane, implies that the segment
    $\pt_{k+i} \pp_i$ intersects the positive side of the $x$-axis.
    Then the origin $\pq$ is contained in $\triangle_i$, as claimed.

    Computing these triangles, we have found a log for $\pq$ of rank
    $k$ in $O(n + k \log k)$ time.
\end{proof}

The Tukey depth of a point can be as large as $\floor{n/2}$. Indeed,
consider the vertices of a regular $n$-gon (for odd $n$), the polygon
center has depth $\floor{n/2}$.

\begin{lemma}
    \lemlab{r:deep}%
    Let $\PS$ be a set of $n$ points in the plane, and let $\pq$ be a
    query point such that $\PS \cup \{ \pq \}$ is in general position,
    and $\pq$ has Tukey depth larger than $n/3$. Then, one can
    compute, in $O(n \log n)$ time, a log for $\pq$ of rank
    $s = \floor{n/3}$.
\end{lemma}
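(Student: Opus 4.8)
Translate the configuration so that $\pq$ sits at the origin. The general position assumption means no point of $\PS$ equals $\pq$ and no two points of $\PS$ are collinear with $\pq$, so the directions $\dirX{\pp}=(\pp-\pq)/\normX{\pp-\pq}$, for $\pp\in\PS$, are $n$ distinct points of the unit circle, none the antipode of another. The plan is to sort $\PS$ into counterclockwise angular order $\pp_0,\dots,\pp_{n-1}$ around $\pq$ --- this $O(n\log n)$ step will dominate the running time --- discard a few of the points, and then form batches from points that are ``evenly spaced'' in the surviving cyclic order.

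Put $s=\floor{n/3}$ and $m=n-3s\in\{0,1,2\}$. Since $\depthTKY{\pq}{\PS}>n/3$ is an integer we in fact have $\depthTKY{\pq}{\PS}\ge s+1$ (the least integer exceeding $n/3$ is $s+1$ in all three residue classes). We will discard $m$ points so that the $3s$ survivors, in their induced cyclic order $\pp'_0,\dots,\pp'_{3s-1}$, have the following property: writing $\gamma_j$ for the open arc of directions strictly between $\dirX{\pp'_j}$ and $\dirX{\pp'_{j+1}}$ (indices mod $3s$), for every $i$ each of the three arcs $\gamma_i\cup\dots\cup\gamma_{i+s-1}$, $\gamma_{i+s}\cup\dots\cup\gamma_{i+2s-1}$ and $\gamma_{i+2s}\cup\dots\cup\gamma_{i-1}$ contains at most one discarded direction. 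If $m\le1$ we discard $\pp_{3s},\dots,\pp_{n-1}$ and the property is trivial (at most one discarded point exists). If $m=2$, discarding two angularly consecutive points would crowd $s+1$ of the $n$ directions into a single arc, so instead we discard $\pp_0$ and $\pp_{s+1}$; these land in gaps $\gamma_{3s-1}$ and $\gamma_{s-1}$ of the survivors' order, which are at cyclic distance $s$, while each arc above is a block of $s$ consecutive gaps and hence contains no two gaps at cyclic distance $\ge s$.

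The output is the log $\Log=\{\PS_0,\dots,\PS_{s-1}\}$ with $\PS_i=\{\pp'_i,\pp'_{i+s},\pp'_{i+2s}\}$ (indices mod $3s$): these batches are vertex-disjoint, use all $3s$ survivors and number $s$; for each one we recover the stored convex coefficients by solving in $O(1)$ time the $2\times2$ affine system expressing $\pq$ in barycentric coordinates of its three vertices. The one claim to verify is $\pq\in\CHX{\PS_i}$ for each $i$, and we prove it by contradiction. If $\pq\notin\CHX{\PS_i}$ then the origin lies outside the triangle spanned by $\pp'_i,\pp'_{i+s},\pp'_{i+2s}$, so their three directions lie in an open half-circle; equivalently one of the three arcs between cyclically consecutive directions --- one of the three arcs named in the property above --- has length exceeding $\pi$ (general position excludes exactly $\pi$). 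Strictly inside that long arc lie only the $s-1$ survivors between its two bounding indices together with, by the property, at most one discarded direction, hence at most $s$ points of $\PS$. Then at least $n-s$ points of $\PS$ have their direction in the complementary arc, of length less than $\pi$, so they all lie in one open halfplane through $\pq$; the complementary closed halfplane through $\pq$ therefore contains at most $s$ points of $\PS$, contradicting $\depthTKY{\pq}{\PS}\ge s+1$. Everything outside the sort is linear, so the total time is $O(n\log n)$.

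The delicate case is $n\equiv2\pmod{3}$: there the hypothesis only gives $\depthTKY{\pq}{\PS}\ge s+1$, which is exactly tight against an arc carrying $s$ points of $\PS$, so a single misplaced discarded point (producing an arc of $s+1$ points) would break the argument. The crux is thus choosing the two discarded points at cyclic distance $s$ and checking that no block of $s$ consecutive gaps can then swallow both of them.
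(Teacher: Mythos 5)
Your proof is correct and follows essentially the same route as the paper: sort $\PS$ angularly around $\pq$ and form the $s$ triangles $\{\pp_i,\pp_{i+s},\pp_{i+2s}\}$, arguing via Tukey depth that no triangle can have all three vertices in an open halfplane through $\pq$. The one substantive difference is your careful handling of the $n-3s$ leftover points: the paper silently drops them, and for $n\equiv 2\pmod 3$ its counting argument for the third arc only yields a halfplane with $s+1$ points, which does not contradict Tukey depth $\geq s+1$; your choice to discard $\pp_0$ and $\pp_{s+1}$ (at cyclic gap-distance $s$, so no arc of $s$ consecutive gaps absorbs both) closes exactly this case, so your write-up is in fact tighter than the paper's.
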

\begin{proof}
    Assume that $\pq$ is in the origin, and sort the points of $\PS$
    in counterclockwise order, where $\pp_i$ is the $i$\th point in
    this order. For $i=1,\ldots, s$, let
    $\triangle_i = \triangle \pp^{}_i \pp^{}_{s+i } \pp^{}_{2s+i}$.
    We claim that $\angle \pp_i \pq \pp_{s+i } \leq \pi$. Otherwise,
    there is a halfspace containing fewer than $s$ points induced by
    the line passing through $\pp^{}_i$ and $\pq$. Similarly,
    $\angle \pp^{}_{s+i} \pq \pp^{}_{2s+i } \leq \pi$ and
    $\angle \pp^{}_{2s+i} \pq \pp^{}_{i } \leq \pi$, so that $\pq$
    lies inside $\triangle_i$, as desired.
\end{proof}

\begin{theorem}
    \thmlab{c:support}%
    Let $\PS$ be a set of $n$ points in the plane, let $\pq$ be a
    query point such that $\PS \cup \{ \pq \}$ is in general position,
    and suppose that $k=\depthTKY{\pq}{\PS}$ is the Tukey depth of
    $\pq$ in $\PS$. Then one can compute the Tukey depth $k$ of $\pq$,
    along with a log of rank $\tau = \min( \floor{n/3}, k )$, in
    $O(n + k \log k)$ time.
\end{theorem}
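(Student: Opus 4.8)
The plan is to combine the three preceding results into a single case analysis driven by the size of the Tukey depth $k$ relative to $n/3$. First I would run the algorithm of \lemref{tukey:depth:2:d} to compute $k = \depthTKY{\pq}{\PS}$ exactly, together with a closed halfplane realizing it; this step already costs $O(n + k \log k)$ time, which matches the claimed bound, so everything afterwards must also fit within this budget. Once $k$ is known, the target rank is $\tau = \min(\floor{n/3}, k)$, and the branching is immediate: if $k \le n/3$, then $\tau = k$ and we invoke \lemref{r:shallow} to produce a log of rank exactly $k$ in $O(n + k \log k)$ time; if $k > n/3$, then $\tau = \floor{n/3}$ and we invoke \lemref{r:deep} to produce a log of rank $\floor{n/3}$ in $O(n \log n)$ time.

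The only nuisance is the running-time bookkeeping in the deep case: \lemref{r:deep} is stated as $O(n \log n)$, whereas the theorem claims $O(n + k \log k)$. But in that branch we have $k > n/3$, so $k = \Theta(n)$ and $O(n \log n) = O(k \log k) = O(n + k \log k)$; thus the two bounds coincide and there is nothing to fix. (One could alternatively remark that $k \ge n/3$ already forces $\log k = \Theta(\log n)$.) I would also note that we need $k = \depthTKY{\pq}{\PS}$ to be well-defined and the general-position hypothesis to be inherited by the subroutines — both are immediate since the hypothesis $\PS \cup \{\pq\}$ in general position is exactly what \lemref{tukey:depth:2:d}, \lemref{r:shallow}, and \lemref{r:deep} each assume.

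There is no real obstacle here — the theorem is essentially a packaging statement, and the genuine content lives in the three lemmas it cites. The only thing to be slightly careful about is that \lemref{r:shallow} requires the hypothesis $k \le n/3$ and \lemref{r:deep} requires $k > n/3$ (strictly, "Tukey depth larger than $n/3$"), so the case split must use the threshold consistently; since these two conditions partition all possibilities, the combined algorithm is well-defined, and in each case the rank of the returned log is exactly $\min(\floor{n/3}, k)$ as required. Assembling these observations yields the theorem.
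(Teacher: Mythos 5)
Your proposal matches the paper's proof exactly: compute the Tukey depth via \lemref{tukey:depth:2:d}, then dispatch to \lemref{r:shallow} when $k \le n/3$ and to \lemref{r:deep} otherwise. Your extra remark that $k > n/3$ forces $O(n\log n) = O(n + k\log k)$ is a useful bookkeeping detail the paper leaves implicit, but the argument is the same.
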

\begin{proof}
    Compute the Tukey depth of $\pq$ in $O(n + k \log k)$ time, using
    the algorithm of \lemref{tukey:depth:2:d}. If $k \leq n/3$, then
    compute the log using the algorithm of \lemref{r:shallow}, and
    otherwise using the algorithm of \lemref{r:deep}.
\end{proof}

The above implies the following theorem of Birch, which predates
Tverberg's theorem.

\begin{theorem}[\cite{b-o3pp-59}]
    \thmlab{birch}%
    Let $\PS$ be a set of $n =3k$ points in the plane. Then there
    exists a partition of $\PS$ into $k$ vertex-disjoint triangles,
    such that their intersection is not empty.  The partition can be
    computed in $O(n \log n)$ time.
\end{theorem}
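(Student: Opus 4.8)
The plan is to obtain Birch's theorem as a direct corollary of \thmref{c:support}. First I would produce a sufficiently deep query point: by the planar centerpoint theorem, $\PS$ contains a point of Tukey depth at least $\ceil{n/3} = k$, and such a point can be computed in $O(n)$ time~\cite{jm-ccfps-94}; after an arbitrarily small perturbation I may assume it is a point $\pq$ with $\PS \cup \{\pq\}$ in general position. Applying \thmref{c:support} to $\PS$ and $\pq$ then yields, in $O(n + k\log k) = O(n\log n)$ time, a log $\Log$ of $\pq$ of rank $\tau = \min\pth{\floor{n/3},\, \depthTKY{\pq}{\PS}}$. Since $n = 3k$ we have $\floor{n/3} = k$, and since $\depthTKY{\pq}{\PS} \ge k$, the rank is exactly $\tau = k$; write $\Log = \{\PS_1, \ldots, \PS_k\}$.

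Next I would check that this log is already the desired partition into triangles. The sets $\PS_1, \ldots, \PS_k$ are vertex-disjoint, each of size at most three, and satisfy $\pq \in \CHX{\PS_i}$ for all $i$. General position of $\PS \cup \{\pq\}$ rules out batches of size one or two: a one-point batch $\{p\}$ would force $\pq = p$, and a two-point batch $\{p, p'\}$ would force $\pq$, $p$, $p'$ to be collinear. Hence $\cardin{\PS_i} = 3$ for every $i$, so the $k$ pairwise-disjoint batches contain $3k = n$ points in total and therefore partition $\PS$. Each $\CHX{\PS_i}$ is a non-degenerate triangle containing $\pq$, so the $k$ triangles share the common point $\pq$, and in particular their intersection is nonempty. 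Everything runs in $O(n\log n)$ time.

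I do not anticipate a real obstacle here: the substance is entirely in \thmref{c:support}, and hence in \lemref{tukey:depth:2:d}, \lemref{r:shallow}, and \lemref{r:deep}. The only steps needing a little care are verifying that the returned log has full rank $k$ and that every batch is a genuine triangle --- precisely where the hypotheses $n = 3k$ and general position are used --- together with the minor technicality of selecting the centerpoint in general position with respect to $\PS$, which an infinitesimal perturbation handles.
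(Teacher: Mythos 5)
Your proposal is correct and is essentially the paper's own proof: the paper likewise computes a planar centerpoint (which has Tukey depth at least $n/3$) and then extracts the $k$ triangles via the radial construction, calling \lemref{r:deep} directly where you route through the wrapper \thmref{c:support} (which reduces to the same lemmas, and incidentally handles the boundary case of depth exactly $n/3$, which \lemref{r:deep} formally excludes, via \lemref{r:shallow}). Your extra bookkeeping that each batch is a genuine triangle and that the batches exhaust $\PS$ is fine, though it is already immediate from the explicit constructions in \lemref{r:shallow} and \lemref{r:deep}.
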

\begin{proof}
    A centerpoint of $\PS$ can be computed in $O(n \log n)$ time
    \cite{c-oramt-04}. Such a centerpoint has Tukey depth at least
    $k$, so the algorithm of \lemref{r:deep} partitions $\PS$ into $k$
    triangles.
\end{proof}

\begin{remark}
    (A) Note that Tverberg's theorem in the plane is slightly stronger
    -- it states that any point set with $3k-2$ points has Tverberg
    depth $k$.  In such a decomposition, some of the sets may be pairs
    of points or singletons.

    (B) In the colored version of Tverberg's theorem in the plane, one
    is given $3n$ points partitioned into three classes of equal
    size. Agarwal \etal \cite{asw-actp-08} showed how to compute a
    decomposition into $n$ triangles covering a query point, where
    every triangle contains a vertex of each color (if such a
    decomposition exists). This problem is significantly more
    difficult, and their running time is a prohibitive $O(n^{11})$.
\end{remark}

\subsection{Miller and Sheehy's algorithm}

Here, we review Miller and Sheehy's approximation algorithm
\cite{ms-acp-10} for computing a Tverberg point before describing our
improvement.

\paragraph*{Radon partitions.}
Radon's theorem states that a set $\PS$ of $d+2$ points in $\Re^d$ can
be partitioned into two disjoint subsets $\PS_1,\PS_2$ such that
$\CHX{\PS_1} \cap \CHX{\PS_2} \neq \emptyset$. This partition can be
computed via solving a linear system in $d+2$ variables in $O(d^3)$
time. A point in this intersection (which is an immediate byproduct of
computing the partition) is a \emphi{Radon point}.

\bigskip%

Let $\pp$ be a point in $\Re^d$. It is a \emphi{convex combination} of
points $\{\pp_1, \ldots, \pp_m\}$ if there are
$\alpha_1,\ldots, \alpha_m \in [0,1]$ such that
$\pp = \sum_{i=1}^m \alpha_i \pp_i$ and $\sum_i \alpha_i = 1$.

\begin{lemma}[{\normalfont \cite{ms-acp-10}:
    Sparsifying convex combination}]
 \lemlab{sparsify:c}%
 Let $\pp$ be a point in $\Re^d$, and let
 $\PS = \{ \pp_1, \ldots, \pp_m\} \subseteq \Re^d$ be a point
 set. Furthermore, assume that we are given $\pp$ as a convex
 combination of points of $\PS$. Then, one can compute a convex
 combination representation of $\pp$ that uses at most $d+1$ points of
 $\PS$. This takes $O(m d^3)$ time.
\end{lemma}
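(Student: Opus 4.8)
The plan is to iterate the one-step reduction behind \Caratheodory's theorem: while the current representation of $\pp$ uses more than $d+1$ points with positive weight, those points are affinely dependent, and one can push weight along the dependency vector until some coefficient hits $0$. The only care needed is to keep each step cheap, so I would always apply the reduction to a subset of exactly $d+2$ active points, which makes the per-step linear algebra cost $O(d^3)$ rather than something depending on $m$.

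Concretely: if $m \le d+1$ there is nothing to do. Otherwise I would maintain coefficients $\alpha_1,\dots,\alpha_m \ge 0$ with $\sum_i \alpha_i = 1$ and $\pp = \sum_i \alpha_i \pp_i$, along with the set $I = \Set{i}{\alpha_i > 0}$ stored as an array. As long as $\cardin{I} > d+1$, pick any $d+2$ indices $S \subseteq I$; since $\brc{\pp_i}_{i \in S} \subseteq \Re^d$ is affinely dependent, the homogeneous system $\sum_{i \in S}\beta_i \pp_i = \origin$, $\sum_{i \in S}\beta_i = 0$ --- that is, $d+1$ equations in $d+2$ unknowns --- has a nonzero solution $(\beta_i)_{i \in S}$, which I would compute by Gaussian elimination in $O(d^3)$ time. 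Because the $\beta_i$ sum to $0$ and are not all $0$, some $\beta_i > 0$; put $t = \min\Set{\alpha_i/\beta_i}{i \in S,\ \beta_i > 0} > 0$ and replace $\alpha_i$ by $\alpha_i - t\beta_i$ for $i \in S$, leaving all other coefficients untouched. Then I would remove from $I$ every index whose coefficient just became $0$ (there is at least one, namely the minimizer) and repeat.

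The verification steps are routine one-liners: the update changes $\sum_i \alpha_i \pp_i$ by $-t\sum_{i\in S}\beta_i\pp_i = \origin$ and $\sum_i \alpha_i$ by $-t\sum_{i\in S}\beta_i = 0$, so the invariant $\pp = \sum_i \alpha_i \pp_i$, $\sum_i \alpha_i = 1$ is preserved; nonnegativity is preserved since $t > 0$ handles the coordinates with $\beta_i \le 0$ and the choice of $t$ handles those with $\beta_i > 0$. Each iteration strictly shrinks $\cardin{I}$, so the loop stops after at most $m - (d+1)$ iterations with a convex combination of $\pp$ on at most $d+1$ points of $\PS$, which is what we output. For the running time, each iteration is dominated by the $O(d^3)$ dependency computation, with the selection of $S$, the ratio test, the coefficient update, and the array maintenance of $I$ all costing $O(d)$; with $O(m)$ iterations this gives $O(md^3)$ total.

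\textbf{Expected main obstacle.} There is no deep difficulty; the two places I would be careful are (i) checking that the ratio test is well posed, i.e.\ that the chosen $S$ always contains an index with $\beta_i > 0$ --- which is forced by $\sum_{i \in S}\beta_i = 0$ together with nontriviality of $(\beta_i)$ --- and (ii) making sure each reduction touches only $d+2$ points, since a naive variant that recomputes an affine dependency among all currently active points would push the running time up to roughly $\Theta(m^2 d)$. The argument needs no general-position assumption: even if several $\pp_i$ coincide, the reduction still zeroes a coefficient each round.
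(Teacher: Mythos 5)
Your proposal is correct and follows essentially the same route as the paper: the paper's "Radon decomposition" of $d+2$ active points is exactly the nonzero affine dependency you compute, and the paper's "subtracting the equation scaled with the appropriate constant" is your ratio test. You spell out the choice of $t$ and the preservation of nonnegativity more explicitly than the paper does, but the algorithm and the $O(md^3)$ accounting are identical.
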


\begin{proof}
    Assume that we are given $\alpha_1,\ldots, \alpha_m \in [0,1]$,
    $\sum_i \alpha_i = 1$, such that
    $\pp = \sum_{i=1}^m \alpha_i \pp_i$, One can now sparsify the set
    $\PS$ so that it contains only $d+1$ points. Indeed, if any point
    of $\PS$ has zero coefficient in the representation of $\pp$ then
    it can be deleted. If there are more than $d+1$ points with
    non-zero coefficient, then pick $d+2$ of them, say
    $\pp_1, \ldots, \pp_{d+2}$. Computing their Radon decomposition,
    we get convex coefficients
    $\alpha_1, \ldots, \alpha_{k}, \beta_{k+1}, \ldots, \beta_{d+2}
    \in [0,1]$, such that $\sum_i \alpha_i =1$, $\sum_i \beta_i =1$,
    and
    $\sum_i \alpha_i \pp_i + \sum_{i=k+1}^{d+2} -\beta_i \pp_i =
    0$. Subtracting this equation from the current representation of
    $\pp$ (scaled with the appropriate constant) yields a
    representation of $p$ as a convex combination with more zero
    coefficients. This takes $O(d^3)$ time, and repeating it at most
    $m-d$ times results in the desired reduced set of size $d+1$. The
    process takes $O(d^3m)$ time overall.
\end{proof}

\begin{lemma}[{\normalfont \cite{ms-acp-10}: Merging logs}]
    \lemlab{double:rank}%
    Given $d+2$ sites $(\pp_1, \Log_1),$
    $\ldots,(\pp_{d+2}, \Log_{d+2})$ of rank $\rank$ in $\Re^d$, where
    the logs $\Log_i$ are disjoint, one can compute a site
    $(\pp, \Log)$ of rank $2\rank$ in $O( \rank d^5)$ time.
\end{lemma}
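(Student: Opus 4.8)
The plan is to invoke Radon's theorem on the $d+2$ points $\pp_1,\dots,\pp_{d+2}$ and then fuse the $d+2$ logs ``level by level'', sparsifying each fused batch with the algorithmic \Caratheodory theorem (\lemref{sparsify:c}). First I would compute, in $O(d^3)$ time by solving a linear system in $d+2$ variables, a Radon partition of the index set $\{1,\dots,d+2\}$ into two nonempty disjoint sets $I$ and $J$, together with a point $\pp$ and convex coefficients $(\alpha_i)_{i\in I}$ and $(\beta_j)_{j\in J}$ satisfying $\pp=\sum_{i\in I}\alpha_i\pp_i=\sum_{j\in J}\beta_j\pp_j$. Writing $\Log_i=\{\PS_{i,1},\dots,\PS_{i,\rank}\}$, recall that each site stores, for every batch $\PS_{i,\ell}$, convex coefficients expressing $\pp_i$ as a combination of the at most $d+1$ points of $\PS_{i,\ell}$.

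Next, for each level $\ell\in\{1,\dots,\rank\}$ I would form the union $\QS_\ell=\bigcup_{i\in I}\PS_{i,\ell}$. Composing the stored coefficients with the $\alpha_i$ expresses $\pp$ as an explicit convex combination of the points of $\QS_\ell$ (the resulting coefficients are nonnegative and sum to $\sum_{i\in I}\alpha_i=1$), so $\pp\in\CHX{\QS_\ell}$; since $\cardin{\QS_\ell}\le |I|(d+1)=O(d^2)$, \lemref{sparsify:c} converts this into a convex combination supported on a batch $\PS_\ell'\subseteq\QS_\ell$ of at most $d+1$ points, in $O(d^2\cdot d^3)=O(d^5)$ time. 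Doing the symmetric thing with $J$ in place of $I$ yields batches $\PS_\ell''$ for $\ell=1,\dots,\rank$. I then set $\Log=\{\PS_1',\dots,\PS_\rank',\PS_1'',\dots,\PS_\rank''\}$ and output the site $(\pp,\Log)$, which has rank $2\rank$.

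For correctness I only need to check that the $2\rank$ batches are pairwise vertex-disjoint, since each has at most $d+1$ points and contains $\pp$ in its convex hull by construction. This is exactly where the hypotheses are used: within a single $\Log_i$ the batches are disjoint, so the $\PS_\ell'$ (each drawn from a distinct level $\ell$) are disjoint, and likewise the $\PS_\ell''$; and $\PS_\ell'$ uses points only from logs $\Log_i$ with $i\in I$ while $\PS_{\ell'}''$ uses points only from logs $\Log_j$ with $j\in J$, and these are disjoint because $I\cap J=\emptyset$ and the input logs $\Log_1,\dots,\Log_{d+2}$ are pairwise disjoint. The running time is $O(d^3)$ for the Radon step plus $2\rank$ sparsifications at $O(d^5)$ each, i.e.\ $O(\rank d^5)$ in total.

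The main obstacle is precisely this disjointness bookkeeping, which dictates the structure of the construction: we must fuse the $\ell$-th batches of the $I$-side logs with one another (and, separately, the $\ell$-th batches of the $J$-side logs), never mixing batches of different levels, and it is the reason the lemma requires the $d+2$ input logs to be disjoint. Everything else — propagating the convex coefficients through the composition and then through \lemref{sparsify:c}, and the arithmetic bounding $\cardin{\QS_\ell}$ — is routine.
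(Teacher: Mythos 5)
Your proposal is correct and follows essentially the same route as the paper's proof: compute the Radon point and partition, fuse the $\ell$-th batches of the logs on each side of the partition level by level, and sparsify each $O(d^2)$-sized union down to $d+1$ points via \lemref{sparsify:c}, for $O(\rank d^5)$ total time. You merely spell out the coefficient-composition and disjointness bookkeeping that the paper leaves as "straightforward."
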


\begin{proof}
    Let $\pp$ be the Radon point of $\pp_1, \ldots, \pp_{d+2}$, and
    let the Radon partition be given by
    $\PS_1 = \{ \pp_1, \ldots, \pp_k\}$ and
    $\PS_2 = \{ \pp_{k+1},\ldots, \pp_{d+2}\}$. Picking the first set
    $\QS_i$ in the long $\Log_i$, for $i=1,\ldots, k$, results in a
    set $\cup \QS_i$ of size $O(d^2)$, whose convex hull contains
    $\CHX{ \PS_1}$. It also contains $\pp$, as
    $\pp \in \CHX{ \PS_1} \cap\CHX{\PS_2}$.

    Since every log contains $\rank$ sets, we can repeat this process
    $\rank$ times. we thus get $\rank$ disjoint sets, each of size
    $O(d^2)$, from the logs of the points of $\PS_1$, and each of
    their convex hulls contains $\pp$. Similarly, we get a similar log
    of size $\rank$ from $\PS_2$, and the union of the two logs is the
    desired new log of $\pp$ of rank $2\rank$. It is straightforward
    to compute the convex representation of $\pp$ in these new logs.

    The only issue is that each set $\QS$ in the new log has size
    $O(d^2)$.  We now sparsify every such set using
    \lemref{sparsify:c}.  This takes $O(d^5)$ time per set, and
    $O(d^5 \rank)$ time overall.
\end{proof}

\paragraph*{A recycling algorithm for computing a Tverberg point.}

The algorithm of Miller and Sheehy maintains a collection of
sites. Initially, it converts each input point $\pp \in \PS$ into a
site $(\pp, \{\pp\})$ of rank one. The algorithm then merges $d+2$
sites of rank $r$ into a site of rank $2r$, using
\lemref{double:rank}. Before the merge, the input logs use
$\rank(d+2) (d+1)$ points in total. After the merge, the new log uses
only $2 (d+1) \rank$ points. The algorithm recycles the remaining
$(d+2) (d+1)\rank - 2 (d+1) \rank = d(d+1)\rank$ points by reinserting
them into the collection as singleton sites of rank one.  When no
merges are available, the algorithm outputs the maximum-rank site as
the approximate Tverberg point.

\paragraph*{Analysis.}
For our purposes, we need a slightly different way of analyzing the
above algorithm of Miller and Sheehy \cite{ms-acp-10}, so we provide
the analysis in detail.

\begin{lemma}
    \lemlab{m:s}%
    Let $2^h$ be the rank of the output site. For $n \geq 8d(d+1)^2$,
    we have that
    \begin{equation*}
        \ceil{ \frac{n}{2(d+1)^2} }%
        \leq %
        2^h%
        \leq  %
        \frac{2n}{(d+1)^2}.
    \end{equation*}
\end{lemma}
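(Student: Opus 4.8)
The plan is to track, at every step of the execution, the set of points currently occupied by the sites. Write $\mathcal{C}$ for the current collection of sites, and for a site $S\in\mathcal{C}$ of rank $r_S$ let $P_S\subseteq\PS$ be the set of all points appearing in the batches of $S$. Two simple facts drive everything. First, the family $\{P_S\}_{S\in\mathcal{C}}$ is always a partition of $\PS$: this holds at initialization, where $\PS$ is split into $n$ singletons, and it is preserved by every merge, since a merge (via \lemref{double:rank}) replaces $d+2$ sites by one new site whose batches use a subset of the points of the old ones, and every point not reused is returned to $\mathcal{C}$ as a singleton. Second, since $S$ has $r_S$ batches of at most $d+1$ points each, $|P_S|\le (d+1)\,r_S$; moreover one has a matching lower bound $|P_S|\ge (d+1)(r_S-1)$ for every $S$ of rank $r_S\ge 2$ (discussed at the end).

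For the lower bound on $2^h$, combine these: at every moment $n=\sum_{S\in\mathcal{C}}|P_S|\le (d+1)\sum_{S\in\mathcal{C}}r_S$, hence $\sum_{S} r_S\ge n/(d+1)$. When the algorithm halts, no merge is possible, so there are at most $d+1$ sites of each rank $2^i$; writing $2^h$ for the largest rank present,
\[
    \frac{n}{d+1}\;\le\;\sum_{S} r_S\;\le\;(d+1)\sum_{i=0}^{h}2^i\;=\;(d+1)\bigl(2^{h+1}-1\bigr)\;<\;(d+1)\,2^{h+1}.
\]
Thus $2^h> n/\bigl(2(d+1)^2\bigr)$, and since $2^h$ is an integer, $2^h\ge\ceil{n/(2(d+1)^2)}$.

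For the upper bound, first note $n\ge 8d(d+1)^2>d+2$, so the initial collection admits a merge; hence the lower bound already forces $2^h\ge\ceil{8d(d+1)^2/(2(d+1)^2)}=4d\ge 4$, so $h\ge 2$. Consider the first merge that produces a rank-$2^h$ site: just before it, $\mathcal{C}$ contains the $d+2$ sites $S_1,\dots,S_{d+2}$ of rank $2^{h-1}\ge 2$ that this merge consumes. By the partition property they are pairwise disjoint, and by the lower bound on point counts each has $|P_{S_i}|\ge (d+1)\bigl(2^{h-1}-1\bigr)$, so
\[
    (d+2)(d+1)\bigl(2^{h-1}-1\bigr)\;=\;\sum_{i=1}^{d+2}|P_{S_i}|\;\le\;n ,
\]
which gives $2^{h-1}\le n/\bigl((d+1)(d+2)\bigr)+1$, i.e.\ $2^h\le 2n/\bigl((d+1)(d+2)\bigr)+2$. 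Since $n\ge 8d(d+1)^2\ge (d+1)^2(d+2)$ (because $8d\ge d+2$), we have $2\le 2n/\bigl((d+1)^2(d+2)\bigr)=2n/(d+1)^2-2n/\bigl((d+1)(d+2)\bigr)$, and combining with the previous inequality yields $2^h\le 2n/(d+1)^2$.

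The step I expect to be the real obstacle is the claim $|P_S|\ge (d+1)(r_S-1)$ for sites of rank $r_S\ge 2$, i.e.\ that repeated merging keeps the batches essentially full. One clean route is to pad every batch produced by \lemref{double:rank} up to exactly $d+1$ points using spare input points, which is possible so long as enough singletons remain in $\mathcal{C}$; a fully robust argument must instead control, via general position, how many points survive each application of the sparsification of \lemref{sparsify:c} inside \lemref{double:rank}, and carrying out this bookkeeping carefully is the delicate part. It is precisely to absorb the resulting ``$+2$'' in the last estimate (and to guarantee the algorithm runs long enough for the counting above to be meaningful) that the hypothesis $n\ge 8d(d+1)^2$ is invoked.
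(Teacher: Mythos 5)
Your lower bound is correct and is essentially the paper's own argument: count the points held in the logs at termination, using at most $d+1$ points per batch and at most $d+1$ unmerged sites per rank. The upper bound, however, rests entirely on the claim that every site of rank $r\ge 2$ occupies at least $(d+1)(r-1)$ input points, and you leave that claim unproved, correctly flagging it as ``the real obstacle.'' It genuinely is one: it does not follow from the merge procedure by any easy induction. In \lemref{double:rank}, a batch of the merged site is the sparsification (\lemref{sparsify:c}) of the union of one batch from each site on \emph{one side} of the Radon partition of the $d+2$ centers; that side can be a singleton, in which case the new batch simply inherits a single old batch and can be as small as that batch was. So deficient batches can a priori propagate, and ruling out that the total deficit of a site ever exceeds $d+1$ requires exactly the bookkeeping you defer. (Already for a rank-$2$ site the two batches together contain only $d+2$ points and one of them can be a single point; whether a rank-$4$ site in high dimension meets your bound hinges on knowing that the batches built from the large side of the partition sparsify to a full $d+1$ points, which is a general-position statement, not a combinatorial one.)

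The paper closes the gap differently, and only where it is needed. It first uses the already-established lower bound together with $n\ge 8d(d+1)^2$ to get $2^{h-1}\ge 2d$, and then argues via general position that every batch of a site of sufficiently high rank (larger than $d+1$) contains \emph{exactly} $d+1$ points: such a batch is assembled from many distinct original points, and the sparsified convex representation of the site's center generically cannot involve fewer than $d+1$ of them, since otherwise the center would lie on a flat spanned by at most $d$ input points, an event of probability zero under perturbation. With full batches, the $d+2$ sites consumed by the final merge account for $(d+1)(d+2)2^{h-1}\le n$ points outright, so no ``$+2$'' slack has to be absorbed and the hypothesis on $n$ is used only to guarantee the top-level sites have high rank. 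Your skeleton is the right one and your arithmetic for absorbing the slack is fine, but to turn the proposal into a proof you must either prove your per-site point-count lower bound (which will require essentially this general-position analysis anyway) or, more simply, assert fullness only for the high-rank sites that the final merge consumes, as the paper does.
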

\begin{proof}
    Consider the logs when the algorithm stopped. The number of points
    in all the logs is $n$. Since a site of rank $2^i$ has at most
    $(d+1)2^i$ points of $\PS$ in its log, and there are at most $d+1$
    sites of each rank, $n$ is at most $\sum_{i=0}^h (d+1)^2 2^i$. As
    such,
    \begin{equation}
        n%
        \leq %
        \sum_{i=0}^h (d+1)^2
        2^i
        =%
        (d+1)^2(2^{h+1} -1)%
        \iff%
        \frac{n}{(d+1)^2} + 1 \leq 2^{h+1}
        \implies%
        2^h \geq  \ceil{ \frac{n}{2(d+1)^2} }.
    \end{equation}

    As for the lower bound, consider the logs just before the output
    site was computed. There are $d+2$ sites, each of rank
    $2^{h-1}$. Each of their logs contains $(d+1)2^{h-1}$ points.
    Here, we use the lower bound on $n$, as a batch in a log of a site
    of rank smaller than (say) $2d$ can have fewer elements. However,
    under general position assumption, the way the above algorithm
    works, once the rank is sufficiently large (i.e., $> d+1$), it
    must be that each batch in the log indeed contains $d+1$
    points. Indeed, initially, a batch is of size one. If the site if
    of rank $2^i$, then a batch for it is formed by the union of $2^i$
    original batches (i.e., at least $2^i$ points) -- the algorithm
    then trims such a set to be of size $d+1$. Once a site gets to be
    of sufficiently high rank, under general position assumption, it
    is not going to lie on a $k$ dimensional subspace induced by $k+1$
    original points of $\PS$, for $k < d$ (indeed, just consider
    slightly perturbing the points -- the probability for this to
    happen is zero).  As such, we conclude that
    $n\geq (d+1)(d+2) 2^{h-1}$. This implies that
    \begin{math}
        2^{h} \leq 2n/ \bigl( (d+1)(d+2) \bigr) \leq 2n /(d+1)^2.
    \end{math}
\end{proof}

Every site in the collection is associated with a history tree that
describes how it was computed. Thus, the algorithm execution generates
(conceptually) a forest of such history trees, which is the
\emphi{history} of the computation.

\begin{lemma}
    \lemlab{max:rank}%
    Let $2^h$ be the maximum rank of a site computed by the
    algorithm. The total number of sites in the history that are of
    rank (exactly) $2^{h-i}$ is at most $(d+2)^{i+1} -1$.
\end{lemma}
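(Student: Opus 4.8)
The plan is to bound, by induction on $i$, the quantity $N_i$ defined as the total number of sites of rank exactly $2^{h-i}$ that are ever created during the execution, whether or not they survive into the final collection; the goal is $N_i \le (d+2)^{i+1}-1$. Two structural facts drive the argument: every merge takes exactly $d+2$ sites of some common rank $r$ and produces a single site of rank $2r$ (\lemref{double:rank}), and when the algorithm halts no merge is available, so the final collection contains at most $d+1$ sites of any fixed rank.

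For the base case $i = 0$, I would observe that a site of rank $2^h$ is never merged, since a merge would create a site of rank $2^{h+1}$, contradicting the maximality of $2^h$; hence every rank-$2^h$ site survives into the final collection, and by the second fact $N_0 \le d+1 = (d+2)^{0+1} - 1$. For the inductive step with $i \ge 1$, I would split the rank-$2^{h-i}$ sites into those consumed by some merge and those surviving in the final collection. A merge consuming rank-$2^{h-i}$ sites consumes exactly $d+2$ of them and outputs one site of rank $2^{h-(i-1)}$, and each rank-$2^{h-(i-1)}$ site is the output of exactly one such merge; hence the number of consumed rank-$2^{h-i}$ sites is precisely $(d+2)N_{i-1}$, while the number of surviving ones is at most $d+1$. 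This yields $N_i \le (d+2)N_{i-1} + (d+1)$, and combined with $N_0 \le d+1$ a straightforward induction gives $N_i \le (d+2)\bigl((d+2)^i-1\bigr) + (d+1) = (d+2)^{i+1}-1$.

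The main point requiring care is the case $i = h$, that is, the rank-one sites: besides the $n$ original points, each merge of rank-$r$ sites reinserts $d(d+1)r$ leftover points as new rank-one sites. I would check that the dichotomy still covers these — every reinserted rank-one site is eventually either merged into a rank-two site or left in the final collection — so the recursion closes with no extra term for recycling. Apart from this bookkeeping, the argument is a routine induction.
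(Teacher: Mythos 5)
Your proof is correct and follows essentially the same route as the paper: the split into sites consumed by a merge versus sites surviving in the final collection is exactly the paper's decomposition of the history forest into children of higher-rank nodes plus roots of new trees, yielding the identical recurrence $N_i \le (d+2)N_{i-1} + (d+1)$ with $N_0 \le d+1$. The extra care you take with the rank-one level (recycled singletons) is sound and only adds bookkeeping the paper leaves implicit.
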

\begin{proof}
    There are at most $T(0) = d+1$ nodes of rank $2^{h-0}$ in the
    history, and each has $d+2$ children of rank $2^{h-1}$. In
    addition, there might be $d+1$ trees in the history whose roots
    have rank $2^{h-1}$. Thus,
    \begin{equation*}
        T(1)%
        =%
        d+1 + (d+2)T(0) = (d+1)+ (d+2)(d+1)%
        =%
        (d+3)(d+1)
        =%
        (d+2)^2 - 1.
    \end{equation*}
    Assume that $T(i-1) = (d+2)^{i}-1$.  By induction, there are at
    most
    \begin{equation*}
        T(i)%
        =%
        d+1  + (d+2)T(i-1)%
        =%
        d+1 + (d+2)\bigl((d+2)^i - 1\bigr)
        =%
        (d+2)^{i+1} -1
    \end{equation*}
    nodes in the history of rank $2^{h-i}$.
\end{proof}

\begin{lemma}
    \lemlab{top:r:t}%
    Let $2^h$ be the maximum rank of a site computed by the
    algorithm. The total amount of work spent (directly) by the
    algorithm (throughout its execution) at nodes of rank
    $\geq 2^{h-i}$ is
    \begin{math}
        O( d^{4+i} n).
    \end{math}
    In particular, the overall running time of the algorithm is
    $O(d^4 n^{1 + \log_2 d})$.
\end{lemma}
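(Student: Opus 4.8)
The plan is to charge all of the algorithm's work to the nodes of the history forest and then sum the contributions rank class by rank class, using \lemref{max:rank} to count the nodes in each class and \lemref{double:rank} to bound the per-node cost. First I fix the cost charged to a single history node. A node of rank $\rho = 2^{j}$ with $j \ge 1$ is produced by a single application of \lemref{double:rank} to $d+2$ children of rank $\rho/2$, at cost $O(\rho\, d^{5})$ (this is the $O(\rank\, d^{5})$ bound of \lemref{double:rank} with $\rank = \rho/2$). The only other work the algorithm performs is bookkeeping: building the $n$ initial singleton sites, reinserting the $d(d+1)(\rho/2)$ points recycled by each merge as new rank-one sites, and maintaining one bucket of sites per rank so that a group of $d+2$ equal-rank sites to be merged is found in $O(1)$ amortized time. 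Each such operation touches a point or a site $O(d)$ times, so it is absorbed by charging every node of rank $\rho$ a cost of $O(\rho\, d^{5})$ (for a leaf, $\rho = 1$, and $O(d^{5}) \ge O(d)$). Hence it suffices to bound $\sum_{v} \rho(v)\, d^{5}$ over all history nodes $v$ with $\rho(v) \ge 2^{h-i}$.

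Now I carry out the summation. By \lemref{max:rank}, there are at most $(d+2)^{i'+1}-1$ nodes of rank exactly $2^{h-i'}$. Assuming (as we may) that $d \ge 2$, so $(d+2)^{i'+1} \le (2d)^{i'+1}$,
\begin{equation*}
    \sum_{i'=0}^{i}(d+2)^{i'+1}\cdot O\!\pth{d^{5}\, 2^{h-i'}}
    \;\le\;
    \sum_{i'=0}^{i} O\!\pth{2^{h+1}\, d^{i'+6}}
    \;=\;
    O\!\pth{2^{h}\, d^{6}\sum_{i'=0}^{i} d^{i'}}
    \;=\;
    O\!\pth{d^{6+i}\, 2^{h}},
\end{equation*}
the last equality because $\sum_{i'=0}^{i} d^{i'}$ is a geometric series dominated by its top term $d^{i}$. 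Plugging in $2^{h} \le 2n/(d+1)^{2}$ from \lemref{m:s} turns this into $O\!\pth{d^{6+i}\, n/(d+1)^{2}} = O\!\pth{d^{4+i}\, n}$, which is the stated bound. For the ``in particular'' part I set $i = h$, giving $O\!\pth{d^{4+h}\, n}$; since a site of rank $2^{h}$ has at least $2^{h}$ points in its log we have $2^{h} \le n$, hence $d^{h} = \pth{2^{h}}^{\log_{2} d} \le n^{\log_{2} d}$, and the running time is $O\!\pth{d^{4}\, n^{1+\log_{2} d}}$.

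I do not expect a serious obstacle: the two quantitative ingredients (\lemref{double:rank} and \lemref{max:rank}) are already in hand, and what remains is an elementary sum. The one point that needs care is the accounting in the first step — one must check that \emph{every} primitive the algorithm executes, in particular the reinsertion of recycled points (which over the whole run involves a $d(d+1)$ factor more points than actually enter the new logs), is genuinely dominated by the $O(\rho\, d^{5})$ charge assigned to some node, so that no stray additive $\Theta(n\log n)$-type term creeps in. It is also worth stating explicitly that the sum over rank classes is dominated by the \emph{lowest} class being counted (rank $\approx 2^{h-i}$): descending one rank level multiplies the work by roughly $d/2$, which is exactly why the exponent of $d$ in the final bound is $4+i$ rather than a constant.
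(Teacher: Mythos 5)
Your proposal is correct and follows essentially the same route as the paper: charge the work to history nodes, count nodes per rank class via \lemref{max:rank}, bound the per-node merge cost via \lemref{double:rank}, and convert $2^h$ to $n$ via \lemref{m:s}, with $i=h$ and $2^h\le n$ giving the $n^{1+\log_2 d}$ bound. Your extra care in absorbing the bookkeeping and recycling costs into the per-node charge is a welcome refinement the paper leaves implicit, but it does not change the argument.
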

\begin{proof}
    Bu \lemref{max:rank}, there are at most $(d+2)^{i+1}$ nodes of
    rank $2^{h-i}$ in the history.  By \lemref{m:s}, the rank of such
    a node is
    \begin{equation*}
        \rank_i%
        =%
        \frac{2^h}{2^i}%
        \leq%
        \frac{2n}{2^{i}(d+1)^2},
    \end{equation*}
    and the amount of work spent in computing the node is
    \begin{math}
        O( \rank_i d^6) =%
        O \pth{ d^4 n / 2^i }.
    \end{math}
    As such, the total work in the top
    $i$ ranks of the history is proportional to
    \begin{math}
        L_i = \sum_{j=0}^i (d+2)^j \cdot d^4 n /2^j = O( d^{4+i} n).
    \end{math}

    Since $h \leq \log_2 n$, the overall running time is
    $O( L_h) = O(d^{4 + \log_2n}n) = O(d^4 n^{1 + \log_2 d})$.
\end{proof}

\section{Improved Tverberg approximation %
   algorithms}
\seclab{improved}

\subsection{Projections in low dimensions}

\begin{lemma}
    \lemlab{t:3:d}%
    Let $\PS$ be a set of $n$ points in three dimensions. One can
    compute a Tverberg point of $\PS$ of depth $n/6$ (and the log
    realizing it) in $O(n \log n)$ time.
\end{lemma}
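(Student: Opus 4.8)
The plan is to lift the planar theory of Birch through the vertical projection $\pi:\Re^3\to\Re^2$ that forgets the last coordinate. After a generic infinitesimal perturbation we may assume $\pi$ is injective on $\PS$ and that no triple of $\PS$ spans a plane parallel to the last coordinate axis; all of Step~1--3 below respects these assumptions.

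\textbf{Step 1 (solve in the plane).} Put $\PS'=\pi(\PS)$ and discard up to two points so that the remaining planar set has $3m$ points, where $m=\floor{n/3}$. Compute a centerpoint $\pq'$ of this set in $O(n\log n)$ time \cite{c-oramt-04}; it has Tukey depth $\ge m$. By \thmref{c:support}, in $O(n+m\log m)=O(n\log n)$ time we get a log for $\pq'$ of rank $\min(m,\depthTKY{\pq'}{\cdot})=m$, i.e.\ vertex-disjoint triples $\PS'_1,\dots,\PS'_m$ with $\pq'\in\CHX{\PS'_i}$ for every $i$. Pull these back to triples $\PS_1,\dots,\PS_m\subseteq\PS$ with $\pi(\PS_i)=\PS'_i$; these are pairwise vertex-disjoint since $\pi$ is injective on $\PS$.

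\textbf{Step 2 (fold along the fiber).} Let $\Line=\pi^{-1}(\pq')$ be the vertical line over $\pq'$. Because $\pi(\CHX{\PS_i})=\CHX{\PS'_i}\ni\pq'$ and, by our position assumption, $\CHX{\PS_i}$ is a non-vertical triangle, $\Line$ meets $\CHX{\PS_i}$ in exactly one point $\pc_i$; compute its height $t_i$ and its convex combination over $\PS_i$ in $O(1)$ time. Relabel so that $t_1\le\cdots\le t_m$ and let $\pc\in\Line$ be the point of height $t_{\ceil{m/2}}$. For each $i\le\floor{m/2}$ the segment $\pc_i\pc_{m+1-i}\subseteq\Line$ contains $\pc$, hence $\pc\in\CHX{\PS_i\cup\PS_{m+1-i}}$, with a convex combination over those six points obtained from the ones for $\pc_i$ and $\pc_{m+1-i}$; if $m$ is odd, also keep the batch $\PS_{\ceil{m/2}}$, which already contains $\pc=\pc_{\ceil{m/2}}$. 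The resulting $\ceil{m/2}$ batches are pairwise vertex-disjoint.

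\textbf{Step 3 (sparsify and conclude).} Apply \lemref{sparsify:c} with $d=3$ to each batch to reduce it to at most $4$ points still containing $\pc$ in its convex hull, in $O(1)$ time per batch. This yields a log for $\pc$ of rank $\ceil{\floor{n/3}/2}\ge\floor{n/6}$, and every step outside Step~1 takes $O(n)$ time, so the total running time is $O(n\log n)$. The only point needing care is the general-position bookkeeping of Step~2 --- injectivity of $\pi$ on $\PS$ and non-verticality of each $\CHX{\PS_i}$, both preserved by a generic perturbation --- together with the elementary fact that $\pi$ commutes with taking convex hulls, which is what lets the planar log pull back to spatial triangles that each meet the fiber $\Line$ in a single point; beyond that there is no real obstacle.
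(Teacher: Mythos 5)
Your proposal is correct and follows essentially the same route as the paper: project to the plane, apply Birch's theorem (\thmref{birch}/\thmref{c:support}) to get $\floor{n/3}$ disjoint triangles covering a planar Tverberg point, lift to the vertical fiber, take the median of the intersection heights, pair triangles above with triangles below, and sparsify each six-point batch to four points via \lemref{sparsify:c}. Your extra bookkeeping (general position of the projection, the odd-$m$ case, explicit convex combinations) only makes the argument more careful than the paper's.
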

\begin{proof}
    Project the points of $\PS$ to two dimensions, and compute a
    Tverberg point $\pp$ and a partition for it of size $n/3$, using
    \thmref{birch}. Lifting from the plane back to the original space,
    the point $\pp$ lifts to a vertical line $\Line$, and every
    triangle lifts to a triangle that intersects $\Line$. Pick a point
    $\pq$ on this line which is the median of the intersections. Now,
    pair every triangle intersecting $\Line$ above $\pq$ with a
    triangle intersecting $\Line$ below it. This partitions $\PS$ into
    $n/6$ sets, each of size $6$, such that the convex hull of each
    sets contains $\pq$. Within each set, compute at most $4$ points
    whose convex hull contains $\pq$. These points yield the desired
    log for $\pq$ of rank $n/6$.
\end{proof}

\begin{remark}
    \lemref{t:3:d} seems innocent enough, but to the best of our
    knowledge, it is the best one can do in near-linear time in 3D.
    The only better approximation algorithm we are aware of is the one
    suggested by Tverberg's theorem.  It yields a point of Tverberg
    depth $n/4$, but its running time is $O(n^{13})$ (see
    \lemref{tverberg:exact:alg}).

    As observed by Mulzer and Werner \cite{mw-atplt-13}, one can
    repeat this projection mechanism.  Since Mulzer and Werner
    \cite{mw-atplt-13} bottom their recursion at dimension $1$, their
    algorithm computes a point of Tverberg depth $n/ 2^{d}$ (in three
    dimensions, depth $n/8$).  Applying this projection idea but
    bottoming at two dimensions, as above, yields a point of Tverberg
    depth $n/(3\cdot 2^{d-2})$.
\end{remark}

\begin{lemma}
    \lemlab{t:4:d}%
    Let $\PS$ be a set of $n$ points in four dimensions. One can
    compute a Tverberg point of $\PS$ of depth $n/9$ (and the log
    realizing it) in $O(n \log n)$ time.

    More generally, for $d$ even, and a set $\PS$ of $n$ points in
    $\Re^d$, one can compute a point of depth $n/3^{d/2}$ in
    $d^{O(1)} n \log n$ time. For $d$ odd, we get a point of depth
    $n/(2\cdot 3^{(d-1)/2})$.
\end{lemma}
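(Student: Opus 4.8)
The plan is to recurse by projecting down two dimensions at a time, invoking Birch's theorem (\thmref{birch}) at each stage rather than a one-dimensional median-and-pair step performed twice. The gain is that covering a planar point set by triangles costs a factor of only $3$ in the rank, whereas two successive median-and-pair steps cost a factor of $2\cdot 2=4$. More precisely, I would prove by induction on $d$ that one can compute, in $d^{O(1)}n\log n$ time, a point $\pq\in\Re^d$ together with a log for $\pq$ of rank (up to floors) $n/3^{d/2}$ when $d$ is even and $n/(2\cdot 3^{(d-1)/2})$ when $d$ is odd; the lemma, and in particular the case $d=4$ with rank $n/9$, is then just the instance for the relevant value of $d$. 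Throughout we maintain, for each batch of the log, the convex coefficients that express $\pq$ as a convex combination of that batch, as required by \defref{log}.

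The base cases are immediate. For $d=1$, sort $\PS$, let $\pq$ be the median, and pair each point below $\pq$ with one above it (the median itself forming a singleton batch when $n$ is odd); this is a log of rank $\floor{n/2}$ computed in $O(n\log n)$ time. For $d=2$, \thmref{birch} produces a log of rank $\floor{n/3}$ in $O(n\log n)$ time.

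For the inductive step with $d\ge 3$, reduce to dimension $d-2$. Project $\PS$ orthogonally onto a generic $(d-2)$-flat and, by the induction hypothesis, compute there a point $\pp$ and a log $\{\QS_1,\dots,\QS_m\}$ for $\pp$ of rank $m$, where each $\QS_i\subseteq\PS$ has at most $d-1$ points, with known convex coefficients. The preimage of $\pp$ under the projection is a $2$-flat $\Pi\subseteq\Re^d$, and since $\pp$ is a convex combination of the projections of the points of $\QS_i$, the same coefficients exhibit a point $x_i\in\CHX{\QS_i}\cap\Pi$ (hulls now taken in $\Re^d$). Identifying $\Pi$ with $\Re^2$, run \thmref{birch} on $x_1,\dots,x_m$ to obtain a point $\pq\in\Pi$ together with a partition of the indices into $\floor{m/3}$ triples. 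For a triple $\{i,j,k\}$ we have $\pq\in\triangle x_ix_jx_k\subseteq\CHX{\QS_i\cup\QS_j\cup\QS_k}$, so the unions over the triples form a log for $\pq$ of rank $\floor{m/3}$, with convex coefficients obtained by composing the barycentric coordinates of $\pq$ in its triangle with those of $x_i,x_j,x_k$. Finally, sparsify each batch (which has at most $3(d-1)$ points) down to at most $d+1$ points of $\PS$ using \lemref{sparsify:c}. Each of the $O(d)$ levels divides the rank by $3$, and the base rank is $\floor{n/2}$ or $\floor{n/3}$, so the final rank matches the claim; each level costs $O(n\log n)$ for the call to \thmref{birch} on $m\le n$ points plus $O(d^{O(1)}n)$ for forming the $x_i$ and for the sparsifications, hence $d^{O(1)}n\log n$ overall --- which for $d=4$ is $O(n\log n)$, and for $d=3$ recovers \lemref{t:3:d}.

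The only step needing care is general position: \thmref{birch}, via \lemref{r:deep}, is stated for configurations in general position, yet the representative points $x_i\in\Pi$ are constructed rather than part of the input. I would handle this either by choosing each $x_i$ generically inside $\CHX{\QS_i}\cap\Pi$ --- which is a full-dimensional convex polygon in $\Pi$ once $\PS$ is perturbed so that $\pp$ lies in the interior of each projected hull --- or, equivalently, by the standard perturbation argument reducing the entire computation to the general-position case. The remaining bookkeeping is routine: after $t$ levels a batch is a union of at most $3^t$ of the original batches and so is small enough for \lemref{sparsify:c}, and the per-level running times form a geometric series.
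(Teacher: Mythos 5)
Your proposal is correct and takes essentially the same approach as the paper: project down two dimensions at a time, apply Birch's theorem (\thmref{birch}) at each level (with a one-dimensional median-and-pair step as the base case for odd $d$), lift via representative points in the fiber $2$-flat, and sparsify each merged batch with \lemref{sparsify:c}. The paper spells out only the $d=4$ case and states that the general case follows by repeating the argument; your inductive formulation, including the general-position caveat for the constructed points $x_i$, is just a more careful rendering of the same proof.
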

\begin{proof}
    As mentioned above, the basic idea is due to Mulzer and
    Werner. Project the four-dimensional point set onto the plane
    spanned by the first two coordinates (i.e. ``eliminate'' the last
    two coordinates), and compute a $n/3$ centerpoint using
    \thmref{birch}. Translate the space so that this centerpoint lies
    at the origin. Now, consider each triangle in the original
    four-dimensional space. Each triangle intersects the
    two-dimensional subspace formed by the first two coordinates. Pick
    an intersection point from each lifted triangle. On this set of
    $n/3$ points, living in this two dimensional subspace, apply again
    the algorithm of \thmref{birch} to compute a Tverberg point of
    depth $(n/3)/3 = n/9$. The resulting centerpoint $\pp$ is now
    contained in $n/9$ triangles, where every vertex is contained in
    an original triangle of points. That is, $\pp$ has depth $n/9$,
    where each group consists of $9$ points in four dimensions. Now,
    sparsify each group into $5$ points whose convex hull contains
    $\pp$.

    The second part of the claim follows from applying the above
    argument repeatedly.
\end{proof}

\subsection{An improved quasi-polynomial algorithm}

The algorithm of Miller and Sheehy is expensive at the bottom of the
recursion tree, so we replace the bottom with a faster algorithm. We
use the following result of Mulzer and Werner.

\begin{theorem}[\cite{mw-atplt-13}]
    \thmlab{m:w}%
    Given a set $\PS$ of $n$ points in $\Re^d$, one can compute a site
    of rank $\geq n/4(d+1)^3$ (together with its log) in
    $d^{O(\log d)}n $ time.
\end{theorem}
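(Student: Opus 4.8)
The statement is due to Mulzer and Werner, so the plan is to reconstruct their argument; it is an amplification of the Radon-point construction underlying \lemref{m:s}, reorganized to run in near-linear time. The engine is the merge operation of \lemref{double:rank}, which turns $d+2$ disjoint sites of rank $r$ into one site of rank $2r$ in $O(r d^5)$ time. Applying merges bottom-up starting from the $n$ input points (each a rank-one site) is exactly Miller and Sheehy's scheme, which reaches rank $\approx n/2(d+1)^2$ but, in its naive realization, reprocesses recycled points $\Theta(\log n)$ times and hence costs $d^{O(1)}n^{1+\log_2 d}$ time (see \lemref{top:r:t}). The point of the theorem is to trade a factor of $2(d+1)$ in the rank for a running time of $d^{O(\log d)} n$.

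First I would commit to building a single site of a prescribed rank $2^h$ with $2^h = \Theta(n/(d+1)^3)$, via a recursion whose shape is a $(d+2)$-ary merge structure of depth $h$: a rank-$2^i$ site is assembled from $d+2$ disjoint rank-$2^{i-1}$ sites. Since a full such tree would need $(d+2)^h \gg n$ points, the crux is a recycling discipline: each merge at rank $2^i$ sparsifies out $\Theta(d\cdot 2^i (d+1))$ points (the ones dropped from the merged log), and these are re-fed into the construction, but --- unlike in Miller and Sheehy --- in a scheduled way that guarantees every input point is handled by only $O(\log d)$ merges rather than $O(\log n)$. I would then set up the bookkeeping (a potential/charging argument) that makes this guarantee precise.

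Given that, the two bounds follow. For the rank: a point-budget argument in the spirit of \lemref{m:s} shows that $n$ points suffice to drive this (more wasteful, but cheaper) recycling policy up to rank $\geq n/4(d+1)^3$; the extra $2(d+1)$ factor over \lemref{m:s} is exactly the slack spent on speed. For the time: every merge at rank $2^i$ costs $O(2^i d^6)$, the number of merges charged at each rank is bounded using the $O(\log d)$-reprocessing guarantee (so that the $(d+2)$-ary blow-up of \lemref{max:rank} is effectively cut off after $O(\log d)$ levels), and a geometric-series summation as in \lemref{top:r:t} yields $d^{O(\log d)} n$ overall, with the initial pass that forms the rank-one sites costing $O(dn)$.

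The main obstacle is exactly this simultaneous control of the two quantities. Any recycling policy aggressive enough to make the running time near-linear risks discarding so much point-mass that the rank drops below $\Omega(n/(d+1)^3)$, while Miller and Sheehy's fully thrifty policy keeps the rank high but explodes the number of merges. Getting the discipline --- how many and which freed points to reinsert, and at what rank --- calibrated so that both the $n/4(d+1)^3$ lower bound on the rank and the $O(\log d)$ bound on per-point reprocessing hold at once is the heart of the Mulzer--Werner argument, and I expect most of the effort to go there.
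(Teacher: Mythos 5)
This theorem is not proved in the paper at all: it is imported verbatim from Mulzer and Werner \cite{mw-atplt-13} and used as a black box (notably as the base case of \thmref{m:w:better}). So the only question is whether your reconstruction stands on its own, and it does not. Your central device --- run the Miller--Sheehy merge tree but with a ``recycling discipline'' under which every input point participates in only $O(\log d)$ merges --- is never constructed (you explicitly defer it as ``the heart of the argument''), and as described it cannot work. Each merge of \lemref{double:rank} only doubles the rank, so producing a site of rank $2^h = \Theta\bigl(n/(d+1)^3\bigr)$ from rank-one singletons forces chains of $h = \Theta(\log n)$ consecutive merges; a point that is touched by only $O(\log d)$ rank-doubling merges can only ever sit inside a site of rank $2^{O(\log d)} = d^{O(1)}$. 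Thus capping per-point reprocessing at $O(\log d)$ caps the achievable rank at $d^{O(1)}$, far below $n/4(d+1)^3$ for large $n$; no choice of ``which freed points to reinsert, and at what rank'' escapes this, because reinserted points re-enter as rank-one sites. There is also a circularity: the way this paper itself truncates the Miller--Sheehy tree to depth $O(\log(d/\delta))$ (in \thmref{m:w:better}) is precisely by invoking \thmref{m:w} to manufacture high-rank seed sites cheaply --- so a proof of \thmref{m:w} consisting of ``truncated Miller--Sheehy'' presupposes the very subroutine it is trying to build.

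The ingredient you are missing is the one the paper itself gestures at in \lemref{t:3:d}, \lemref{t:4:d}, and the remark between them: Mulzer and Werner's speedup comes from \emph{dimension reduction by lifting}, not from rescheduling Radon merges. One projects to $\Re^{\lceil d/2\rceil}$, recursively computes a Tverberg partition there, lifts each part to a point of its convex hull on the fiber over the Tverberg point (a $\lfloor d/2\rfloor$-dimensional affine subspace), recurses on those lifted points, and merges the corresponding groups, sparsifying via \lemref{sparsify:c}. The recursion on the \emph{dimension} has depth $O(\log d)$, which is where both the $d^{O(\log d)}n$ running time and the polynomial-in-$d$ (rather than $2^d$) loss of $4(d+1)^3$ come from; Miller--Sheehy enters only as an auxiliary tool inside this scheme, not as its skeleton. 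If you want to write a genuine proof, that lifting recursion --- with the bookkeeping showing the quality degrades by only a constant factor per halving of the dimension --- is what needs to be developed.
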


Let $\delta \in (0,1)$ be a small constant. We modify the algorithm of
Miller and Sheehy by keeping all the singleton sites of rank one in a
buffer of \emphw{free} points. Initially, all points are in the
buffer.  Whenever the buffer contains at least $\delta n$ points, we
use the algorithm of \thmref{m:w} to compute a site of rank
$\rho \geq \delta n /8(d+1)^3$, where $\rho$ is a power of two.  (If
the computed rank is too large, we throw away entries from the log
until the rank reaches a power of two.) We insert this site into the
collection of sites maintained by the algorithm.  As in Miller and
Sheehy's algorithm, we repeatedly merge $d+2$ sites of the same rank
to get a site of double the rank.  In the process, the points thrown
out from the log are recycled into the buffer. Whenever the buffer
size exceeds $\delta n$, we compute a new site of rank $\rho$. This
process stops when no sites can be merged and the number of free
points is less than $\delta n$.

\begin{theorem}
    \thmlab{m:w:better}%
    Given a set $\PS$ of $n$ points in $\Re^d$, and a parameter
    $\delta \in (0, 1)$, one can compute a site of rank at least
    $\frac{(1-\delta)n}{2(d+1)^2}$ (together with its log) in
    $d^{O(\log (d/\delta))}n $ time.
\end{theorem}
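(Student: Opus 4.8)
The plan is to run the buffered variant of Miller and Sheehy's recycling algorithm, as described just before the statement, and then to repeat (essentially verbatim) the accounting argument of \lemref{m:s} and \lemref{top:r:t}, adapted to take into account the buffer of free points. First I would argue that the algorithm terminates and produces a valid site: every merge is an application of \lemref{double:rank}, every freshly inserted high-rank site is produced by \thmref{m:w} (after trimming its log down to a power of two $\rho \geq \delta n / 8(d+1)^3$), and no step invalidates the invariant that each item in the collection is a genuine site (a point together with a log of the claimed rank, with stored convex coefficients). Termination follows because the process stops exactly when neither a merge is possible nor the buffer has $\delta n$ free points; since each high-rank insertion consumes $\Theta(\delta n / (d+1)^3)$ buffer points and the total number of points is $n$, only polynomially many insertions can occur, and between insertions the merge structure is finite just as in \lemref{max:rank}.

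The heart of the argument is the rank lower bound. Let $2^h$ be the rank of the output site. When the algorithm halts, the buffer holds fewer than $\delta n$ points, so the logs of the sites currently in the collection together use more than $(1-\delta)n$ points of $\PS$. Exactly as in the proof of \lemref{m:s}, a site of rank $2^i$ uses at most $(d+1)2^i$ points of $\PS$, and there are at most $d+1$ sites of each rank $2^0, 2^1, \ldots, 2^h$ in the collection at the end (any further site of a given rank would have triggered a merge). Hence
\begin{equation*}
    (1-\delta)n
    <
    \sum_{i=0}^{h} (d+1)^2 2^i
    =
    (d+1)^2 (2^{h+1}-1)
    <
    (d+1)^2 2^{h+1},
\end{equation*}
which gives $2^h > \frac{(1-\delta)n}{2(d+1)^2}$, the claimed bound. (One also uses, as in \lemref{m:s}, that for $n$ large enough every batch in a log of sufficiently high rank has exactly $d+1$ points under the general position assumption; for small $n$ relative to $d$ and $\delta$ the bound is vacuous or one absorbs it into the $O$-notation.)

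For the running time I would mimic \lemref{max:rank} and \lemref{top:r:t}. The history is again a forest of trees whose internal nodes are merges with fan-in $d+2$; the only new feature is that now some leaves are not singletons but sites of rank $\rho$ produced by \thmref{m:w}. Each such leaf costs $d^{O(\log d)} n$ time to build, and I would bound the number of these leaves: each consumes $\Omega(\delta n/(d+1)^3)$ points from the buffer, and points re-enter the buffer only as recycled residue of merges, so the total number of \thmref{m:w} invocations is at most $O((d+1)^3/\delta) \cdot (\text{number of points ever recycled})/n$, which a charging argument (each point is recycled $O(\log n)$ times, once per rank doubling) bounds by $d^{O(1)}/\delta \cdot \log n$ — a $d^{O(\log(d/\delta))}$-type factor after folding in the per-call cost. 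The merge work is controlled by the same geometric-series computation as in \lemref{top:r:t}: the number of sites of rank $2^{h-i}$ is $O((d+2)^{i+1})$, each costs $O(\rank_i d^6) = O(d^4 n / 2^i)$ by \lemref{double:rank}, and since the recursion now bottoms out not at rank $1$ but at rank $\rho \approx \delta n/8(d+1)^3$, the depth of the merge forest is only $h - \log_2 \rho = O(\log((d+1)^3/\delta)) = O(\log(d/\delta))$ rather than $\log_2 n$. Summing the geometric series over these $O(\log(d/\delta))$ levels yields $(d+2)^{O(\log(d/\delta))} \cdot O(d^4 n) = d^{O(\log(d/\delta))} n$, and adding the cost of the $\thmref{m:w}$ calls keeps the same bound.

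The main obstacle is the bookkeeping for the buffer: I need to be careful that the $(1-\delta)n$ lower bound on ``points currently inside logs'' really holds at termination (it could fail transiently mid-run), and that the number of \thmref{m:w} invocations — and hence the extra $d^{O(\log d)}n$ cost per invocation — does not blow up the running time beyond $d^{O(\log(d/\delta))}n$. Both reduce to a clean amortization: charge each \thmref{m:w} call to the $\Omega(\delta n/(d+1)^3)$ buffer points it removes, and observe each point is charged $O(\log(d/\delta))$ times total, since a point only returns to the buffer after participating in a merge that (implicitly) raised a rank, and the relevant ranks span an interval of length $O(\log(d/\delta))$ between $\rho$ and $2^h$. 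Everything else is a transcription of the Miller–Sheehy analysis already given.
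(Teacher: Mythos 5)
Your overall plan is exactly the paper's: run the buffered Miller--Sheehy variant, rerun the counting of \lemref{m:s} on the $\geq(1-\delta)n$ points sitting in logs at termination to get $2^h \geq (1-\delta)n/2(d+1)^2$, and rerun \lemref{top:r:t} over the $H = O(\log(d/\delta))$ levels between rank $\rho$ and rank $2^h$ to bound the merge work by $d^{O(H)}n$. All of that is correct and matches the paper.

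The one weak spot is your bound on the number of invocations of \thmref{m:w}, which you correctly identify as the main obstacle but do not resolve. Your first version charges each call to the $\Omega(\delta n/(d+1)^3)$ buffer points it consumes and lets each point be recycled $O(\log n)$ times, which yields an extra $\log n$ factor and misses the claimed $d^{O(\log(d/\delta))}n$ bound. Your second version asserts each point is charged only $O(\log(d/\delta))$ times ``since the relevant ranks span an interval of length $O(\log(d/\delta))$,'' but that does not follow: a recycled point returns to the buffer, is absorbed into a fresh rank-$\rho$ site, climbs up, and can be recycled again, so nothing a priori caps the number of cycles a single point undergoes. The paper sidesteps the charging entirely: every site produced by \thmref{m:w} is a node of rank exactly $\rho = 2^{h-H}$ in the history forest, and \lemref{max:rank} bounds the number of history nodes of rank $2^{h-H}$ by $(d+2)^{H+1}$, independently of $n$. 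Multiplying by the $d^{O(\log d)}n$ cost per call gives $d^{O(\log(d/\delta))}n$ directly. (If you prefer an aggregate charging argument, it can be made to work --- the total number of recycling events is itself controlled by \lemref{max:rank} summed over levels --- but then you are invoking \lemref{max:rank} anyway, so the direct count is the cleaner route.)
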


\begin{proof}
    When the algorithm above stops, there are at least $(1-\delta)n$
    points in its logs.  Arguing as in \lemref{m:s}, the output site
    has rank
    \begin{equation*}
        2^h \geq \frac{(1-\delta)n}{2(d+1)^2}.
    \end{equation*}

    We now consider the running time.  The algorithm maintains only
    nodes of rank $\rho \geq 2^{h-H}$, where
    \begin{equation*}
        H%
        =%
        \ceil{\log_2 \frac{ 2^h}{\rho}}%
        \leq%
        \ceil{\log_2 \frac{ n/2(d+1)^2}{\delta n /8(d+1)^3}}%
        =%
        1 + \log_2 \frac{4(d+1)}{\delta}
        =%
        O( \log(d/\delta ) ).
    \end{equation*}
    The total work spent on merging nodes with these ranks is
    equivalent to the work in Miller and Sheehy's algorithm for such
    nodes.  By \lemref{top:r:t}, the total work performed is
    $O(d^{4+H} n)$.

    As for the work associated with the buffer, observe that
    \thmref{m:w} is invoked $(d+2)^{H+1}$ times (this is the number of
    nodes in the history of rank $2^{h-H}$). Each invocation takes
    $d^{O(\log d)} n$ time, so the total running time of the algorithm
    is
    \begin{equation*}
        d^{O(H)} n + (d+1)^{H+1} d^{O(\log d)} n = d^{O(\log(d/\delta))} n.
    \end{equation*}
\end{proof}

\subsection{A strongly polynomial algorithm}
\seclab{strong}

\begin{lemma}
    \lemlab{tverberg:1}%
    Let $\PS$ be a set of $n$ points in $\Re^d$. For
    $N = \Theta( d^2 \log d)$, consider a random coloring of $\PS$ by
    $k=\floor{n/N}$ colors, and let $\{ \PS_1, \ldots, \PS_k \}$ be
    the resulting partition of $\PS$. With probability
    $\geq 1- k/d^{O(d)}$, this partition is a Tverberg partition of
    $\PS$ -- that is, $\cap_i \CHX{ \PS_i} \neq \emptyset$.
\end{lemma}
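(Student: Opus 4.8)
The plan is to combine the centerpoint theorem with the $\eps$-net theorem, the point being that a random color class is simultaneously small enough to be ``one of $k\approx n/N$ pieces'' and large enough to be a $\tfrac1{d+1}$-net for halfspaces. Let $\pc$ be a centerpoint of $\PS$, so that every closed halfspace containing $\pc$ contains at least $n/(d+1)$ points of $\PS$ (such a $\pc$ exists unconditionally). I will show that, with the claimed probability, $\pc\in\CHX{\PS_i}$ for \emph{all} $i$, which is exactly the statement that $\{\PS_1,\dots,\PS_k\}$ is a Tverberg partition.

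Fix a color $i$. Since each point of $\PS$ gets a color independently and uniformly at random, $\PS_i$ is an independent random sample of $\PS$ of expected size $n/k\ge N$ (as $k=\floor{n/N}\le n/N$). Consider the range space on $\PS$ whose ranges are the subsets cut out by closed halfspaces; its VC dimension is $d+1$. The $\eps$-net theorem, in its version for samples that retain each point independently, says that for some absolute constant $c$ a sample of expected size at least $\tfrac{c}{\eps}\bigl((d+1)\log\tfrac1\eps+\log\tfrac1\gamma\bigr)$ is an $\eps$-net with probability $\ge 1-\gamma$. Taking $\eps=1/(d+1)$, this threshold becomes $c(d+1)\bigl((d+1)\log(d+1)+\log\tfrac1\gamma\bigr)$, so choosing $N=\Theta(d^2\log d)$ with a sufficiently large constant makes $\PS_i$ a $\tfrac1{d+1}$-net for halfspaces of $\PS$ with probability $\ge 1-1/d^{\Theta(d)}$; the two terms $(d+1)\log(d+1)$ and $\log(1/\gamma)$ are both of order $d\log d$, which is precisely why the threshold is of order $d^2\log d$.

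Now suppose $\PS_i$ is such a net but $\pc\notin\CHX{\PS_i}$. Then some hyperplane strictly separates $\pc$ from $\PS_i$: there is an open halfspace $h^+\supseteq\PS_i$ whose complementary closed halfspace $h^-$ contains $\pc$. By the centerpoint property, $\cardin{\PS\cap h^-}\ge n/(d+1)=\eps\cardin{\PS}$, so the $\eps$-net $\PS_i$ must meet $h^-$; but $\PS_i\subseteq h^+$ and $h^+\cap h^-=\emptyset$, a contradiction. Hence being a $\tfrac1{d+1}$-net forces $\pc\in\CHX{\PS_i}$. A union bound over the $k$ colors then gives: with probability $\ge 1-k/d^{\Theta(d)}$, every $\PS_i$ is a $\tfrac1{d+1}$-net, so $\pc\in\bigcap_{i=1}^k\CHX{\PS_i}$ and the coloring is a Tverberg partition.

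The one place that requires care is the parameter bookkeeping in the $\eps$-net theorem, i.e.\ checking that $N=\Theta(d^2\log d)$ simultaneously buys $\eps=1/(d+1)$ and failure probability $1/d^{\Theta(d)}$ per color. It is worth noting that the naive alternative --- fixing $\pc$ and union-bounding directly over all combinatorially distinct halfspaces --- does \emph{not} work, since there are $n^{\Theta(d)}$ of them and $d^{-\Theta(d)}$ cannot absorb that factor when $n\gg d$; invoking the $\eps$-net theorem is exactly what performs this union bound in a dimension-only (rather than $n$-dependent) way.
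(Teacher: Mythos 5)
Your proof is correct and follows essentially the same route as the paper's: fix a centerpoint $\pc$ of $\PS$, note that each color class is (with high probability) a $\tfrac{1}{d+1}$-net for halfspaces by the $\eps$-net theorem (VC dimension $d+1$), observe that a separating hyperplane for $\pc$ and $\PS_i$ would yield a halfspace with $\ge n/(d+1)$ points missed by the net, and union-bound over the $k$ classes. Your treatment of the sampling model (independent coloring versus a fixed-size sample) and of the parameter bookkeeping is, if anything, slightly more careful than the paper's.
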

\begin{proof}
    Let $\eps = 1/(d+1)$.  The \VC dimension of halfspaces in $\Re^d$
    is $d+1$ by Radon's theorem. By the $\eps$-net theorem
    \cite{hw-ensrq-87,h-gaa-11}, a sample from $\PS$ of size
    \begin{equation*}
        N = \ceil{ \frac{8(d+1)}{\eps} \log (16(d+1)) }
        =%
        \Theta(d^2 \log d)
    \end{equation*}
    is an $\eps$-net with probability
    $\geq 1 - 4\bigl(16(d+1)\bigr)^{-2(d+1)}$. Thus
    $\PS_1, \ldots, \PS_k$ are all $\eps$-nets with the probability
    stated in the lemma.

    Now, consider the centerpoint $\pq$ of $\PS$. We claim that
    $\pq \in \CHX{\PS_i}$, for all $i$. Indeed, assume otherwise, so
    that there is a separating hyperplane between $\pq$ and some
    $\PS_i$. Then the halfspace induced by this hyperplane that
    contains $\pq$ also contains at least $\eps n$ points of $\PS$,
    because $\pq$ is a centerpoint of $\PS$. But this contradicts that
    $\PS_i$ is an $\eps$-net for halfspaces.
\end{proof}

\begin{lemma}
    \lemlab{partition}%
    Let $\PS$ be a set of $n$ points in $\Re^d$. For
    $N = O( d^3 \log d)$, consider a random coloring of $\PS$ by
    $k=\floor{n/N} = \Omega( \frac{n}{d^3 \log d})$ colors, and let
    $\{ \PS_1, \ldots, \PS_k \}$ be the resulting partition of $\PS$.
    One can compute, in $O(d^{7} \log^{6} d )$ time, a Tverberg point
    that lies in $\bigcap_i \CHX{\PS_i}$.  This algorithm succeeds
    with probability $\geq 1- k/d^{O(d)}$.
\end{lemma}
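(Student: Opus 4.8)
The plan is to run the $\eps$-net argument of \lemref{tverberg:1} with a \emph{smaller} value of $\eps$, so that the Tverberg point need no longer be an exact centerpoint of $\PS$ (which we cannot compute in time polynomial in $d$), but only an \emph{approximate} centerpoint, which we can. Concretely, I would first compute a point $\pq$ whose Tukey depth in $\PS$ is at least $cn/d^2$, for a suitable absolute constant $c>0$, using the algorithm of Har-Peled and Mitchell \cite{hj-jcps-19}, which after scanning the input runs in $O(d^7\log^6 d)$ time. I would then set $\eps = c/d^2$. Since the \VC dimension of halfspaces in $\Re^d$ is $d+1$ by Radon's theorem, the $\eps$-net theorem says that a uniformly random subset of $\PS$ of size
\[
  N \;=\; O\!\pth{\frac{d+1}{\eps}\log\frac{1}{\eps}} \;=\; O(d^3\log d)
\]
is an $\eps$-net for halfspaces with probability $\geq 1 - 1/d^{\Omega(d)}$, once the hidden constant in $N$ is taken large enough. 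A union bound over the $k = \floor{n/N}$ color classes then shows that $\PS_1,\ldots,\PS_k$ are simultaneously $\eps$-nets for halfspaces with probability $\geq 1 - k/d^{O(d)}$, and the tiny failure probability of the centerpoint computation is absorbed into the same bound.

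What remains is to verify that this $\pq$ lies in $\CHX{\PS_i}$ for every $i$, which goes exactly as in \lemref{tverberg:1}. Suppose not. Then $\CHX{\PS_i}$ is a compact convex polytope not containing $\pq$, so there is a closed halfspace $\hp$ with $\pq \in \hp$ and $\hp \cap \PS_i = \emptyset$ (strictly separate $\pq$ from $\CHX{\PS_i}$ by a hyperplane and slide it until it passes through $\pq$). Since $\pq$ has Tukey depth at least $\eps n$ in $\PS$, the halfspace $\hp$ contains at least $\eps n$ points of $\PS$; but then the $\eps$-net property of $\PS_i$ forces $\hp$ to contain a point of $\PS_i$, a contradiction. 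Hence $\pq \in \bigcap_i \CHX{\PS_i}$, so $\pq$ is the desired Tverberg point, and given the coloring it was computed in $O(d^7\log^6 d)$ time.

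I expect the only genuine difficulty to be bookkeeping rather than any new idea. One has to commit to a single $\eps = \Theta(1/d^2)$ that is simultaneously small enough to be achievable by the approximate-centerpoint algorithm and large enough that the $\eps$-net size stays $O(d^3\log d)$; and one has to drive the failure probability of the randomized algorithm of \cite{hj-jcps-19} down to $1/d^{\Omega(d)}$ — by independent repetitions if needed — so that it is swallowed by the $k/d^{O(d)}$ term. It is also worth flagging what the argument does \emph{not} deliver: it certifies $\pq \in \CHX{\PS_i}$ only through a separating-hyperplane argument and does not hand us the convex combination of $\pq$ over each $\PS_i$; recovering those coefficients would require solving an \LP-type feasibility problem, which is exactly why this result is stated as producing a partition and a Tverberg point but no convex combination.
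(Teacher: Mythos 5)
Your proposal is correct and follows essentially the same route as the paper: replace the exact centerpoint in \lemref{tverberg:1} by the approximate $\Theta(1/d^2)$-centerpoint of \cite{hj-jcps-19}, take $\eps = \Theta(1/d^2)$ so the $\eps$-net size grows to $O(d^3\log d)$, and rerun the separating-hyperplane argument. The details you fill in (union bound over the $k$ classes, absorbing the centerpoint algorithm's failure probability, and the caveat that no convex combinations are produced) are exactly what the paper's terse proof leaves implicit.
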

\begin{proof}
    Compute a $1/(2 d^2)$-centerpoint $\pq$ for $\PS$ using the
    algorithm of Har-Peled and Jones \cite{hj-jcps-19}, and repeat the
    argument of \lemref{tverberg:1} with $\eps = 1/(2d^2)$. With the
    desired probability, $\pq \in \CHX{\PS_i}$ for all $i$.
\end{proof}

\subsection{A weakly polynomial algorithm}

Let $\TLPY{n}{m}$ be the time to solve an \LP with $n$ variables and
$m$ constraints. Using interior point methods, one can solve such
\LP{}s in weakly polynomial time. The fastest known method runs in
$\Ow( mn + n^{3})$ \cite{blss-stdlp-20} or $\Ow( mn^{3/2})$
\cite{ls-eimfa-15}, where $\Ow$ hides polylogarithmic terms that
depends on $n, m$, and the width of the input numbers.

\newcommand{\Eps}{\mathcal{E}}%
\begin{remark}
    \remlab{width}%
    The \emph{error} of the \LP solver, see \cite{blss-stdlp-20}, for
    a prescribed parameter $\eps$, is the distance of the computed
    solution from an optimal one. Specifically, let $R$ be the maximum
    absolute value of any number in the given instance.  In
    $O( (nm + n^3) \log^{O(1)} n \log (n/\eps) )$ time, the \LP solver
    can find an assignment to the variables which is $\Eps$ close to
    complying with the \LP constraints, where $\Eps \leq \eps n m R$
    \cite{blss-stdlp-20}.  That is, the \LP solver can get arbitrarily
    close to a true solution. This is sufficient to compute an exact
    solution in polynomial time if the input is made out of integer
    numbers with polynomially bounded values, as the running time then
    depends on the number of bits used to encode the input. We use
    $\Ow(\cdot)$ to denote such \emphi{weakly polynomial} running
    time.
\end{remark}

\begin{lemma}[\Caratheodory via \LP]
    \lemlab{c:lp}%
    Given a set $\PS$ of $n$ points in $\Re^d$ and query point $\pq$,
    one can decide if $\pq \in \CHX{\PS}$, and if so, output a convex
    combination of $d+1$ points of $\PS$ that is equal to $\pq$. The
    running time of this algorithm is
    $O( \TLPY{n}{n+d} + nd^3) = \Ow(nd + nd^3)$.

    Alternatively, one can compute such a point in
    $2^{O(\sqrt{d \log d})} + O(d^2 n)$ time.
\end{lemma}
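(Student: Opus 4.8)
The plan is to cast the membership test as a linear program, solve it, and then sparsify the resulting convex combination using \lemref{sparsify:c}. For the first bound, I would write the feasibility \LP in the $n$ nonnegative variables $\alpha_1,\ldots,\alpha_n$ with constraints $\sum_i \alpha_i \pp_i = \pq$ and $\sum_i \alpha_i = 1$ --- an \LP with $n$ variables and $O(n+d)$ constraints --- and run an interior-point solver on it in $\TLPY{n}{n+d}$ time. If it is infeasible, report $\pq \notin \CHX{\PS}$; the infeasibility certificate is then an explicit hyperplane separating $\pq$ from $\PS$. If it is feasible, the solver returns coefficients $\alpha_1,\ldots,\alpha_n$ that write $\pq$ as a convex combination of $\PS$, possibly using all $n$ points. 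Feeding this combination into \lemref{sparsify:c} trims it to a convex combination of at most $d+1$ points of $\PS$ in $O(nd^3)$ additional time, for a total of $O(\TLPY{n}{n+d} + nd^3)$.

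For the strongly polynomial alternative I would instead use the separation \LP, which has only $O(d)$ variables: after the lift $\pp \mapsto (\pp,1)$, ask whether there is $w \in \Re^{d+1}$ with $\langle w,(\pp_i,1)\rangle \le 0$ for all $i$ and $\langle w,(\pq,1)\rangle \ge 1$, a system that is solvable exactly when $\pq \notin \CHX{\PS}$. This is an \LP with $O(d)$ variables and $O(n)$ constraints, so the subexponential \LP-type algorithm of Clarkson and of Matou\v{s}ek--Sharir--Welzl solves it in $2^{O(\sqrt{d\log d})} + O(d^2 n)$ time. In the infeasible case --- the one where $\pq$ lies in the hull --- I would have the solver return a minimal infeasible subsystem; since linear feasibility in $\Re^{d+1}$ has combinatorial dimension at most $d+2$, this subsystem consists of at most $d+2$ constraints, and it must be the constraint $\langle w,(\pq,1)\rangle \ge 1$ (the homogeneous point constraints alone being satisfied by $w = \origin$) together with at most $d+1$ of the point constraints. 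Solving the corresponding small linear system --- equivalently, invoking \lemref{sparsify:c} on this constant-size point set --- recovers $\pq$ as a convex combination of at most $d+1$ of those points in $O(d^3)$ time, which is dominated by the rest of the computation.

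The main obstacle is that the interior-point solver in the first bound returns a point only $\Eps$-close to feasibility rather than an exact convex combination, and the \LP-type solver, though strongly polynomial, must be coaxed into emitting an explicit bounded-size infeasibility certificate rather than a mere verdict. The first issue I would resolve exactly as in \remref{width}: for polynomially bounded integer input one rounds the approximate solution to an exact one, which is precisely why that bound is only weakly polynomial and is stated with the $\Ow(\cdot)$ notation. The second is standard for \LP-type problems --- such algorithms naturally maintain a basis --- after which reconstructing the convex coefficients is routine small-scale linear algebra.
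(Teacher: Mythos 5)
Your first algorithm is exactly the paper's: write the natural feasibility \LP in $n$ nonnegative coefficient variables, solve it with an interior-point method, and sparsify the resulting combination via \lemref{sparsify:c}; the issue of the solver returning only an approximately feasible point is likewise handled by the paper only through the $\Ow(\cdot)$ convention of \remref{width}, so you are on the same footing there. For the strongly polynomial alternative the two arguments genuinely diverge. The paper runs \emph{two} low-dimensional separation \LP{}s --- one seeking a non-vertical hyperplane below $\PS$ and above $\pq$, the other the reverse --- and, when both fail, harvests from the two infeasibility witnesses a set of at most $2d$ points whose hull contains $\pq$, then solves one more small \LP on those points to obtain the coefficients. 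You instead homogenize, pose a single separation \LP in $\Re^{d+1}$, and in the infeasible case extract a minimal infeasible subsystem of at most $d+2$ constraints; Farkas/Helly then hands you $\pq$ as a convex combination of at most $d+1$ points directly. Your route is cleaner and skips the final small \LP, at the price of requiring the \LP-type solver to report a basis of an infeasible instance (standard in the MSW framework, as you note); also, \lemref{sparsify:c} as stated takes a convex combination as \emph{input}, so at the last step you are really solving the $O(d)\times O(d)$ linear system rather than invoking that lemma. The paper's two-sided vertical-separation trick buys the same bound while staying within the textbook ``solve an \LP, read off $d$ witnesses'' interface. Both yield $2^{O(\sqrt{d\log d})} + O(d^2 n)$, so the difference is one of exposition, not of strength.
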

\begin{proof}
    We write the natural \LP for representing $\pq$ as a point in the
    interior of $\CHX{\PS}$. This \LP has $n$ variables (one for each
    point), and $n+2d + 2$ constraints (all variables are positive,
    the points sum to $\pq$, and the coefficients sum to $1$), where
    an equality constraint counts as two constraints. If the \LP
    computes a solution, then we can sparsify it using
    \lemref{sparsify:c}.

    The alternative algorithm writes an \LP to find a hyperplane
    separating $\pq$ from $\PS$. First, it tries to find a hyperplane
    which is vertically below $\PS$ and above $\pq$. This \LP has $d$
    variables, and it can be solved in
    $O( 2^{\sqrt{d \log d}} + d^2 n )$ time
    \cite{s-sdlpc-91,c-lvali-95,msw-sblp-96}. If there is no such
    separating hyperplane, then the algorithm provides $d$ points
    whose convex hull lies below $\pq$ and tries again. This time, it
    computes a separating hyperplane below $\pq$ and above $\PS$.
    Again, if no such separating hyperplane exists, then the algorithm
    returns $d$ points whose convex hull lies above $\pq$. The union
    of the computed point sets above and below $\pq$ contains at most
    $2d$ points. Now, write the natural \LP with $2d$ variables as
    above, and solve it using linear-time \LP solvers in low
    dimensions.  This gives the desired representation.
\end{proof}

\begin{remark}
    Rolnick and \Soberon \cite{rs-aatt-16} achieved a result similar
    to that of \lemref{c:lp}, but they used binary search to find the
    $d+1$ coefficients in the minimal representation. Thus their
    running time is (slightly worse) $O( \TLPY{n}{n+d} d \log n ) $.
\end{remark}

Using \lemref{tverberg:1}, we get the following.

\begin{lemma}
    \lemlab{tverberg:1:lp}%
    Let $\PS$ be a set of $n$ points in $\Re^d$. For
    $N = O( d^2 \log d)$, one can compute a site $\pq$ of rank
    $k = n/O(d^2 \log d)$ in $\Ow(n^{5/2} + nd^3)$ time.  The
    algorithm succeeds with probability $\geq 1- k/d^{O(d)}$.
\end{lemma}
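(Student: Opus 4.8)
The plan is to take the random coloring of \lemref{tverberg:1} as the partition, recover an explicit point of the common intersection (and its convex representations) by solving a single linear program, and then sparsify each color class down to a batch of size $d+1$. Concretely, randomly color $\PS$ with $k=\floor{n/N}$ colors for $N=\Theta(d^2\log d)$, producing $\PS_1,\dots,\PS_k$. By \lemref{tverberg:1}, with probability $\geq 1-k/d^{O(d)}$ this coloring is a Tverberg partition, that is $\bigcap_i\CHX{\PS_i}\neq\emptyset$; condition on this event, which is the only place randomness enters the algorithm.

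Next, write the natural feasibility \LP whose variables are the $d$ coordinates of an unknown point $\pq$ together with one coefficient $\alpha_{i,p}\geq 0$ for each $p\in\PS_i$, and whose constraints are $\sum_{p\in\PS_i}\alpha_{i,p}\,p=\pq$ and $\sum_{p\in\PS_i}\alpha_{i,p}=1$ for every color $i$. This \LP has $n+d$ variables and $n+k(d+1)=O(n)$ constraints, the bound on the number of constraints holding because $k(d+1)\leq (d+1)n/N=O(n/(d\log d))$. Solving it with a weakly polynomial interior point method takes $\TLPY{O(n)}{O(n)}=\Ow(n^{5/2})$ time by the $\Ow(mv^{3/2})$ bound for $v$ variables and $m$ constraints \cite{ls-eimfa-15}; since we conditioned on feasibility, the solver returns a point $\pq\in\bigcap_i\CHX{\PS_i}$ together with a representation of $\pq$ as a convex combination of $\PS_i$, for every $i$.

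Finally, each color class has $\cardin{\PS_i}\leq N=O(d^2\log d)$ points, so it is not yet a batch in the sense of \defref{log}; feed the convex representation of $\pq$ over $\PS_i$ to \lemref{sparsify:c} to obtain a representation using at most $d+1$ points of $\PS_i$, at cost $O(\cardin{\PS_i}d^3)$ per color and $O(nd^3)$ over all colors. The $k$ resulting batches, together with $\pq$ and the recomputed coefficients, form a site of rank $k=\floor{n/N}=n/O(d^2\log d)$, and the overall running time is $O(n)+\Ow(n^{5/2})+O(nd^3)=\Ow(n^{5/2}+nd^3)$, with failure probability $\leq k/d^{O(d)}$ inherited from \lemref{tverberg:1}. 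There is no real difficulty here beyond two bookkeeping points: that the coordinate-matching constraints stay cheap, which is exactly why $k$ must be an $\Omega(1/(d\log d))$-fraction of $n$ and why the \LP solve is only $\Ow(n^{5/2})$; and that the interior point solver of \remref{width} returns only an approximately feasible solution, so the $\Ow(\cdot)$ bound tacitly assumes integer input of polynomially bounded magnitude, exactly as discussed in \remref{width}.
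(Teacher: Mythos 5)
Your proposal is correct and follows essentially the same route as the paper: apply the random coloring of \lemref{tverberg:1}, solve one feasibility \LP with $O(n)$ variables and constraints to recover a common point $\pq$ and its convex representations in $\Ow(n^{5/2})$ time via \cite{ls-eimfa-15}, and then sparsify each color class with \lemref{sparsify:c} in $O(nd^3)$ total time. The extra bookkeeping you supply (the explicit constraint count $n+k(d+1)=O(n)$ and the caveat from \remref{width} about approximate feasibility) is consistent with, and slightly more detailed than, the paper's argument.
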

\begin{proof}
    We compute a Tverberg partition using \lemref{tverberg:1}. Next we
    write an \LP for computing an intersection point $\pq$ that lies
    in the interior of the $k = n/O(d^2 \log d)$ sets. This \LP has
    $O(n+d)$ constraints and $n$ variables, and it can be solved in
    $\Ow(n^{5/2})$ time \cite{ls-eimfa-15}. We then sparsify the
    representations over each of the $k$ sets. This requires
    $O( d^2 \log d \cdot d^3 )$ time per set and hence $O(n d^3)$ time
    overall. The result is a site $\pq$ with a log of rank $k$, as
    desired.
\end{proof}

\begin{theorem}
    \thmlab{m:w:better:lp}%
    Given a set $\PS$ of $n$ points in $\Re^d$, and a parameter
    $\delta \in (0,1)$, one can compute a site of rank at least
    $\frac{(1-\delta)n}{2(d+1)^2}$ (together with its log) in
    $d^{O(\log \log (d/\delta))}\Ow( n^{5/2}) $ time.
\end{theorem}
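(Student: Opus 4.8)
The plan is to repeat the proof of \thmref{m:w:better} almost verbatim, changing only the bottom layer of the buffered Miller--Sheehy recursion: instead of refilling the buffer with \thmref{m:w}, refill it with \lemref{tverberg:1:lp}. Concretely, run the recycling algorithm with a buffer of free points (initially all of $\PS$); whenever the buffer holds at least $\delta n$ points, invoke \lemref{tverberg:1:lp} on it to get a site of rank $\rho = \Theta(\delta n/(d^2\log d))$, discard log entries until $\rho$ is a power of two, insert the site, and then repeatedly merge $d+2$ equal-rank sites into one of twice the rank via \lemref{double:rank}, recycling the trimmed points back to the buffer; halt when no merge is available and the buffer is below $\delta n$. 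The gain over \thmref{m:w:better} is that \lemref{tverberg:1:lp} produces a base site whose rank is larger by a factor $\approx d/\log d$ than the rank $\Theta(\delta n/d^3)$ coming from \thmref{m:w}, which removes about $\log d$ levels from the recursion.

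The approximation bound transfers with no change: at termination at least $(1-\delta)n$ points lie in logs, so the counting in \lemref{m:s} yields a top site of rank $2^h \ge (1-\delta)n/2(d+1)^2$, and \lemref{sparsify:c} keeps every batch at $\le d+1$ points. For the running time, the only ranks ever present lie between $\rho$ and $2^h \le 2n/(d+1)^2$, so the number of distinct rank levels in the history is $H = \lceil \log_2(2^h/\rho)\rceil = O\!\left(\log\tfrac{\log d}{\delta}\right)$. Exactly as in \lemref{max:rank} and \lemref{top:r:t}, the merges then cost $d^{O(H)}n$ in total and \lemref{tverberg:1:lp} is invoked $(d+2)^{H+1} = d^{O(H)}$ times, each invocation solving an LP of size $O(n)$ in $\Ow(n^{5/2})$ time; by \remref{width} the interior-point solver's error is $\le \eps\, n m R$, which rounds to an exact convex combination for polynomially-bounded integer inputs, justifying the $\Ow$. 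A single pass thus already gives a $d^{O(\log(\log d/\delta))}\,\Ow(n^{5/2})$ algorithm.

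To bring the exponent down to the claimed $O(\log\log(d/\delta))$, I would iterate: regard the routine above as a new base algorithm $\mathcal{A}_1$ that, on $m$ points, outputs a site of rank $\ge (1-\delta_1)m/2(d+1)^2$, feed $\mathcal{A}_1$ (in place of \lemref{tverberg:1:lp}) into the bottom of a second buffered pass $\mathcal{A}_2$, and so on for a few passes, choosing the per-pass buffer thresholds $\delta_1,\delta_2,\dots$ so that the slack they lose still accumulates to $\delta$. Since each pass pushes the rank of the base sites multiplicatively closer to the ideal $n/2(d+1)^2$, the number of rank levels contributed by successive passes shrinks quickly, and the intent is that the total count of LP solves over all passes stays $d^{O(\log\log(d/\delta))}$, the large-rank merge work being absorbed as in \lemref{top:r:t}; a union bound over these invocations keeps the failure probability inherited from \lemref{tverberg:1:lp} negligible.

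The main obstacle is precisely this last bookkeeping: choosing the per-pass thresholds so that the lost slack totals $\delta$ while the per-pass recursion depths telescope to $O(\log\log(d/\delta))$, and checking that the $\Ow(n^{5/2})$ cost of one LP solve is not blown up on the smaller instances deep in the recursion. A secondary point is the error analysis of the LP solver: since each LP is solved only to within the error of \remref{width}, one must confirm that, after sparsification via \lemref{sparsify:c}, the reported convex combinations are exact for polynomially-bounded integer inputs, which is what the $\Ow(\cdot)$ notation absorbs.
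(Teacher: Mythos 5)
Your first two paragraphs are precisely the paper's proof: run the buffered algorithm of \thmref{m:w:better}, but refill the buffer using \lemref{tverberg:1:lp} instead of \thmref{m:w}. The rank bound transfers verbatim from \lemref{m:s}, and the cost is $(d+2)^{H+1}$ buffer refills at $\Ow(n^{5/2})$ each plus $d^{O(H)}n$ for the merges (via \lemref{max:rank} and \lemref{top:r:t}), where $H$ is the number of rank levels between the base rank $\rho=\Theta\bigl(\delta n/(d^2\log d)\bigr)$ and the top rank $\Theta\bigl(n/d^2\bigr)$. That is the entire argument given in the paper, which simply asserts $H=O(\log\log(d/\delta))$ and stops.

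The only divergence is your third paragraph, and it is not needed to reproduce the paper's argument. You correctly observe that a single pass gives $H=O\bigl(\log(\log(d)/\delta)\bigr)=O(\log\log d+\log(1/\delta))$, which agrees with the advertised exponent $O(\log\log(d/\delta))=O\bigl(\log(\log d+\log(1/\delta))\bigr)$ only when $\delta$ is not extremely small (constant, or inverse polylogarithmic in $d$); the paper elides this distinction, and its stated proof is just the single pass. The bootstrapping scheme you sketch to close that gap is therefore extraneous to the intended proof, and, as you concede, it is not actually carried out: you would need to control how the per-pass slacks $\delta_i$ and the failure probabilities accumulate, and it is not clear the pass count telescopes to $O(\log\log(d/\delta))$ for very small $\delta$. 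The cleanest resolution is to present the single pass and state the running time as $d^{O(\log(\log(d)/\delta))}\Ow(n^{5/2})$, noting that this coincides with the theorem's bound in the standard regime of constant $\delta$.
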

\begin{proof}
    We modify the algorithm of \thmref{m:w:better} to use
    \lemref{tverberg:1:lp} to compute a Tverberg point on the points
    in the buffer. Since the gap between the top rank of the recursion
    tree and rank computed by \lemref{tverberg:1:lp} is
    $O( \log (d/\delta))$, it follows that the algorithm uses only the
    top $O( \log \log (d/\delta))$ levels of the recursion tree, and
    the result follows.
\end{proof}

\newcommand{\Sample}{\Mh{R}}%
\subsection{Faster approximation in low dimensions}

\begin{lemma}
    \lemlab{t:low:dim}%
    Let $\PS$ be a set of $n$ points in $\Re^d$, and let
    $\delta \in (0,1)$ be some parameter. One can compute a Tverberg
    point of depth $\geq (1-\delta)n/d(d+1)$, together with its log,
    in
    \begin{math}
        O(d^{O(d)} /\delta^{2(d-1)} + 2^{O(\sqrt{d\log d})} n + d^2 n
        \log^2 n )
    \end{math}
    time.  The algorithm succeeds with probability close to one.
\end{lemma}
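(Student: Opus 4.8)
\noindent
The plan is to implement the Rolnick--\Soberon strategy --- compute an approximate centerpoint $\pq$, then repeatedly peel off vertex-disjoint simplices of $\PS$ whose convex hulls contain $\pq$, stopping once $\pq$ is no longer in the convex hull of the surviving points --- but to replace both the centerpoint step and the peeling loop by sampling-based subroutines, so that the overall time is near-linear.

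For the centerpoint, I would draw a random sample $\RS \subseteq \PS$ of size $m = \Theta\bigl((d^3/\delta^2)\log(d/\delta)\bigr)$. With probability close to one, $\RS$ is an $\eps$-approximation for halfspaces (whose \VC dimension is $d+1$) with $\eps = \Theta(\delta/d)$, so that an exact centerpoint $\pq$ of $\RS$ --- computed in $O(m^{d-1}) = d^{O(d)}/\delta^{2(d-1)}$ time by the standard arrangement-based algorithm --- has Tukey depth at least $(1-O(\delta))\,n/(d+1)$ in $\PS$. Feeding this $\pq$ into the peeling analysis of \cite{rs-aatt-16} (the step where the extracted simplices must be chosen so that no halfspace through $\pq$ loses too many of its points prematurely) produces, after adjusting the sampling constant, a log for $\pq$ of rank at least $(1-\delta)\,n/\bigl(d(d+1)\bigr)$. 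Each individual extraction --- given the surviving set $Q$, find $\le d+1$ points of $Q$ whose hull contains $\pq$ --- is exactly \lemref{c:lp}, which costs $2^{O(\sqrt{d\log d})} + O(d^2\cardin{Q})$.

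The naive loop runs \lemref{c:lp} $\Theta_d(n)$ times on sets of size $\Theta(n)$, giving the $O_d(n^2)$ bound of \cite{rs-aatt-16}. To speed it up I would (i) maintain $Q$ in a dynamic structure (e.g.\ a Fenwick tree over a $0/1$ alive-array) supporting uniform sampling and deletion in $O(\log n)$ time, and (ii) never run \lemref{c:lp} on all of $Q$, but on a random sample of $Q$ whose size is tuned to the current Tukey depth of $\pq$. Concretely, while that depth lies in $[2^{i},2^{i+1})$, a sample of size $\Theta\bigl(d\,(\cardin{Q}/2^{i})\log n\bigr)$ is, with high probability, fine enough that $\pq$ lies in the interior of its convex hull; we run the peeling of \lemref{c:lp} on such a sample, resample whenever it is exhausted, and advance to scale $i+1$ once the Tukey depth of $\pq$ in $Q$ has dropped below $2^{i}$. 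Since there are only $O(\log n)$ scales, the $2^{O(\sqrt{d\log d})}$ term of \lemref{c:lp} is incurred $O(n)$ times overall, and a routine amortized count of the remaining work (dominated by the $O(d^2)$-per-point cost of \lemref{c:lp} on the samples, plus the $O(\log n)$-per-operation cost of the dynamic structure) gives $O(d^2 n\log^2 n)$; the success probability stays close to one by a union bound over the polynomially many sampling events.

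The main obstacle is showing that restricting the peeling to samples still attains rank $(1-\delta)\,n/\bigl(d(d+1)\bigr)$. The Rolnick--\Soberon bound is essentially tight and relies on extracting \emph{balanced} simplices, so I must check that a balanced extraction inside a sample of $Q$ is also a valid balanced extraction for $Q$ itself, and --- the more delicate point --- that the Tukey depth of $\pq$ in the shrinking set $Q$ decreases slowly and predictably enough that the $O(\log n)$-scale resampling schedule loses only a $1+O(\delta)$ factor of the available depth. Once this accounting is in place, the depth guarantee and the running time follow as sketched above, with the sampling failure events absorbed into the ``probability close to one'' statement.
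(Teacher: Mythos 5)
Your proposal follows essentially the same route as the paper: draw a sample, compute an exact centerpoint of the sample by brute force in $d^{O(d)}/\delta^{2(d-1)}$ time, and then peel off simplices containing it via \lemref{c:lp}, run not on the whole surviving set $Q$ but on an $\eps$-net of $Q$ whose size is inversely proportional to the current Tukey depth of the centerpoint, with the depth tracked through $O(\log n)$ geometric scales; the resulting accounting ($O(d^2 n \log^2 n)$ for the net-based peeling plus $2^{O(\sqrt{d\log d})}$ per extraction) matches the paper's.

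The step you flag as ``the main obstacle'' is not an obstacle, and you should not leave it open: no balanced extraction is needed. If $S$ is any set of at most $d+1$ points of $Q$ with $\pq \in \CHX{S}$, then every closed halfspace whose bounding hyperplane passes through $\pq$ contains at least one point of $S$ (otherwise $\CHX{S}$ would lie in the complementary open halfspace and miss $\pq$), so deleting $S$ decreases the Tukey depth of $\pq$ by at most $d$ --- deterministically, no matter how $S$ was chosen, and in particular when $S$ is found inside a net, since the net is a subset of $Q$. Starting from depth $(1-O(\delta))n/(d+1)$ this immediately yields $(1-\delta)n/\bigl(d(d+1)\bigr)$ extractions, and it also supplies exactly the ``slow and predictable'' decay your resampling schedule needs: after $i$ extractions the depth is still at least $(1-O(\delta))n/(d+1) - id$, which is the lower bound $b_i$ the paper uses to set the net size. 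Two minor points: taking a relative $(\eps,\delta)$-approximation with $\eps = 1/(2(d+1))$, rather than an additive $\Theta(\delta/d)$-approximation, gives sample size $m = O(d^2\delta^{-2}\log d)$ and hence $m^{d-1} = d^{O(d)}/\delta^{2(d-1)}$ exactly as claimed (your $m$ carries an extra $\log^{d-1}(d/\delta)$ into the brute-force term); and you should state that when the required net size exceeds $\cardin{Q}$ the algorithm simply uses all of $Q$.
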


\begin{proof}
    Let $\eps=1/2(d+1)$, and let $\Sample$ be a random sample from
    $\PS$ of size
    \begin{equation*}
        m = O(d \eps^{-1} \delta^{-2} \log \eps^{-1} ) = O( d^2
        \delta^{-2} \log d ).
    \end{equation*}
    This sample is a $(\eps,\delta)$-relative approximation for $\PS$
    with probability close to one \cite{hs-rag-11}. Compute a
    centerpoint $\pc$ for the sample using ``brute force'' in
    $O( m^{d-1})$ time \cite{c-oramt-04}. Now, repeatedly use the
    algorithmic version of \Caratheodory's theorem (\lemref{c:lp}) to
    extract a simplex that contains $\pc$. One can repeat this process
    $\floor{n/d(d+1)}$ times before getting stuck, since every simplex
    contains at most $d$ points in any halfspace passing through
    $\pc$. Naively, the running time of this algorithm is
    $2^{O(\sqrt{d \log d})} n^2$.

    However, one can do better.  The number of simplices extracted by
    the above algorithm is $L= \floor{n/d(d+1)}$.  For $i=1,\ldots L$,
    let $b_i = \floor{n/(d+1)} -d(i-1)$ be a lower bound on the Tukey
    depth of $\pc$ at the beginning of the $i$\th iteration of the
    above extraction algorithm. At this moment, the point set has
    $n_{i} = n -(d+1)(i-1) \geq n/2$ points. Hence the relative Tukey
    depth of $\pc$ is
    \begin{math}
        \eps_{L-i} %
        =%
        { b_{L-i}}/{n_{L-i}}.%
    \end{math}
    In particular, an $\eps_i$-net $\Sample_i$ for halfspaces has size
    \begin{math}
        r_i%
        =%
        O(\frac{d}{\eps_i} \log \frac{1}{\eps_i} ).
    \end{math}
    If $r_{i}$ is larger than the number of remaining points, then the
    sample consists of the remaining points of $\PS$. The convex hull
    of such a sample contains $\pc$ with probability close to one, so
    one can apply \lemref{c:lp} to $\Sample_i$ to get a simplex that
    contains $\PS$ (if the sample fails, then the algorithm
    resamples). The algorithm adds the simplex to the output log,
    removes its vertices from $\PS$, and repeats.

    Since the algorithm invokes \lemref{c:lp} $O(n/d^2)$ times, the
    running time is bounded by
    \begin{math}
        2^{O(\sqrt{d\log d})} n/d^2 + O(d^2 \sum_i r_i).
    \end{math}
    Since $n_{L-i} \geq n/2$ and $b_{L-i} \geq i \cdot d$, we have
    that $\eps_{L-i} \geq (i d/(n/2) = 2id/n$. Therefore,
    \begin{equation*}
        \sum_i r_i%
        =%
        O(n ) + \sum_{i=1}^{L-1} r^{}_{L-i} %
        =%
        O(n  ) + \sum_{i=1}^{L-1} O \pth{ \frac{d n}{2id} \log
           \frac{n}{id}}%
        =%
        O( n \log^2 n).
    \end{equation*}
\end{proof}

\paragraph*{Acknowledgments.}

The authors thank Timothy Chan, Wolfgang Mulzer, David Rolnick, and
Pablo Soberon-Bravo for providing useful references.

\BibTexMode{%
   \SoCGVer{%
      \bibliographystyle{plain}%
   }%
   \NotSoCGVer{%
      \bibliographystyle{alpha}%
   }%
   \bibliography{triangle_cover} }

\BibLatexMode{\printbibliography}

\end{document}